\documentclass[11pt]{article}
\usepackage{graphicx}
\usepackage{setspace}
\usepackage{amsmath}
\usepackage{amsfonts}
\usepackage{geometry}
\usepackage{hyperref}
\usepackage{natbib}
\usepackage{fancyhdr}
\usepackage{titlesec}
\usepackage{booktabs}
\usepackage{multirow}

\usepackage{amssymb}
\usepackage[overload]{textcase}
\usepackage[T1]{fontenc}
\usepackage{longtable}

\usepackage{caption}
\usepackage{subcaption}

\usepackage{color}
\definecolor{mygreen}{rgb}{0,0.6,0}
\definecolor{myblue}{rgb}{0.3,0.2,0.8}
\definecolor{myred}{rgb}{0.8,0.1,0.1}

\usepackage[american]{babel}
\usepackage{csquotes}
\usepackage{mathrsfs}
\usepackage{float}
\usepackage{amsthm}
\newtheorem{theorem}{Theorem}
\usepackage{enumerate}
\usepackage{enumitem}
\newtheorem{proposition}{Proposition}

\newtheorem{corollary}{Corollary}
\newtheorem{lemma}{Lemma}
\newtheorem{remark}{Remark}

\DeclareMathOperator{\E}{\mathbb{E}}
\newcommand{\R}{\mathbb{R}}

\newcommand{\bfo}{\mathbf{1}}
\newcommand{\bfz}{\mathbf{0}}
\newcommand{\Var}{\mathrm{Var}}

\titleformat{\section}[block]{\normalfont\Large\bfseries}{\thesection}{1em}{}
\titleformat{\subsection}[runin]{\normalfont\bfseries}{\thesubsection}{1em}{}
\titleformat{\subsubsection}[runin]{\normalfont}{\thesubsubsection}{1em}{}

\title{\textbf{The Long-Only Minimum Variance Portfolio in a One-Factor Market: Theory and Asymptotics\footnote{We thank Lisa Goldberg, Nick Gunther, Alex Shkolnik, and Hubeyb Gurdogan for helpful conversations.  Claude.ai provided research assistance.  All errors are our own.}} }
\author{
Alec Kercheval \\ Dept. of Mathematics, Florida State University \\ 
{akercheval@fsu.edu} \\
Ololade Sowunmi \\
Dept. of Mathematics, Florida State University \\
{osowunmi@fsu.edu} 
}

\date{ This draft: 6 April  2026 }

\begin{document}

\maketitle
\thispagestyle{empty} 

\begin{abstract}
We study the long-only minimum variance (LOMV) portfolio under a
one-factor covariance model with asset betas of arbitrary sign.
  We provide an explicit solution in terms of the set of active (positive weight) assets, and provide an explicit and computable
characterization of the active set. As a corollary we resolve an open question of \citet{qi2021}
concerning the extension to mixed-sign betas.

In the high-dimensional regime $p \to \infty$ where the betas are drawn from a distribution with cdf $F$, we prove that the
proportion of active assets (the active ratio) in the LOMV portfolio converges in almost all cases to $F(\beta^{*})$, where $\beta^* \geq 0$
is the root of an explicit integral equation determined
by $F$. This is a variation of a result first appearing in
\citet{bernstein2025}. In particular, when $F$ is continuous and all betas are positive ($F(0)=0$), the active ratio converges to zero.   When $F(0) >0$ is small, under mild moment conditions and concentration bounds  we establish the convergence rate
 $F(\beta^*)=O(F(0)^{1/3})$ as $F(0) \to 0$. 
\end{abstract}

 
\section{Introduction}\label{sec:1}

Interest in equity portfolios optimized to have the lowest possible variance goes back at least to \citep{markowitz1952}. The optimal such portfolio depends on the covariances between pairs of assets, and on the particular constraints of interest.

If there are $p$ assets available for investment, we denote by $w= (w_1,\dots,w_p)^\top \in \mathbb{R}^p$ the $p$-dimensional vector of asset weights defining the portfolio.  The global minimum variance (GMV) portfolio $w^{LS}$ denotes the (unique) long-short portfolio solving the simplest minimum variance problem
 \begin{equation}\label{eq:LSprob}
 	      \begin{split}		
 		 &\min_{w \in \mathbb{R}^p}  w^\top \Sigma w \\
           \text{s.t.} \quad     &w^\top\bfo_p= 1,
 	      \end{split}
\end{equation}
where $\Sigma$ denotes the positive definite covariance matrix of asset returns; $\bfo_p$ denotes the vector of dimension $p$ whose every entry is 1; 
and $w^\top \bfo_p =1$ is the full investment condition setting the sum of the weights equal to 1. 
For the long-short portfolio, some of the weights may be negative.

 Our focus is the long-only minimum variance (LOMV) problem 
 \begin{equation}\label{eq: prob 1}
 	      \begin{split}		
 		 &\min_{w \in \mathbb{R}^p}  w^\top \Sigma w \\
           \text{s.t.} \quad     &w^\top\bfo_p= 1\\
             &w_i \geq 0 \text{ } \text{for all  $i=1,2,...,p$.}
 	      \end{split}
\end{equation}
The long-only constraints $w_i \geq 0$ are often required for real investment portfolios due to the complications and costs of short positions. Since the objective function is convex, and the constraint set
$\{w \in \mathbb{R}^p : w^\top \bfo_p =1, w \geq 0 \}$ is non-empty and convex, there is always a unique LOMV solution, denoted $w^L$.  

The solution ${w^L}$ of \eqref{eq: prob 1} can be contrasted with the solution $w^{LS}$ of the long-short problem. 
It is well-known that the portfolio $w^{LS}$ solving problem \eqref{eq:LSprob}  is given by the
simple formula
\begin{equation}
    w^{LS} = \frac{\Sigma^{-1}\bfo_p}{\bfo_p^\top\Sigma^{-1}\bfo_p}.
\end{equation}
The long-only problem \eqref{eq: prob 1} is less straightforward.

As we show in this article, the main difficulty in problem \eqref{eq: prob 1} is determining which are the active (positive weight) assets in the optimal portfolio, or, equivalently, which are the assets for which the long-only constraints are binding. 
Let
\begin{equation}
    P = \{1,2,\dots,p-1,p\} \text{ and } K = \{i \in P: w^L_i >0\}.
\end{equation}
 $P$ is the set of (indices of) all $p$ available assets, and $K$ identifies the active assets in the long-only optimal  portfolio. Let $k \leq p$ denote the number of elements of $K$. Once we have determined $K$, Proposition \ref{thm:wL} below solves the problem explicitly, by expressing
 the active long-only holdings as equal to the long-short minimum risk portfolio holdings corresponding to the reduced set $K$ of assets.
This is intuitively reasonable
because the long-only constraints are not binding on the positive holdings of $w^L$.

This leaves the problem of determining $K$, the set of active assets of $w^L$, which is normally much smaller than the set of positive-weight assets in the long-short portfolio $w^{LS}$ (e.g. see Figure \ref{fig:long_only_long_short} and Theorems \ref{thm:kp} and \ref{thm:convergence}).  In this article we analyze that problem when the covariance matrix comes from a 1-factor (``single index'') model, in which the covariance matrix takes the form
\begin{equation} \label{eq:1factorcov}
\Sigma = \sigma^2 \boldsymbol{\beta} \boldsymbol{\beta}^\top + \Delta,
\end{equation}
where $\sigma^2 >0$ is the variance of the single factor return,   $\boldsymbol{\beta}$ is a $p$-vector of exposures of the asset returns to the  factor, and $\Delta$ is a diagonal matrix of variances $\delta_i^2 > 0, i \in P$, of asset specific returns uncorrelated with the common factor return.

\begin{figure}[H]
    \centering
   \includegraphics[width=0.8\linewidth]{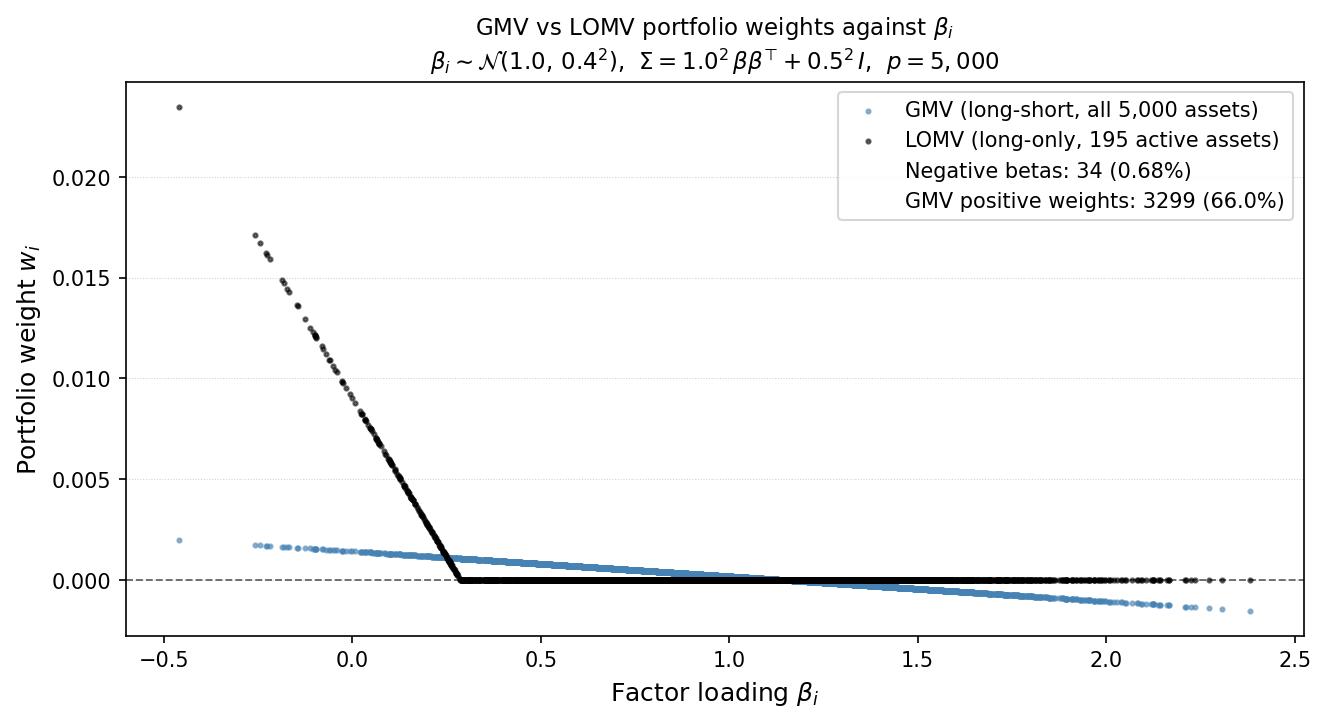}
    \caption{Example comparison of long-short (blue) and long-only (black) minimum variance portfolio weights plotted against beta for $p=5000$ assets under the single-factor model \eqref{eq:1factorcov} with $\beta_i$ drawn iid from a Normal $\mathcal{N}(1,\, 0.4^2)$, $\delta_i = 0.5$ for all $i$, $\sigma^2 = 1$. The LOMV portfolio contains the 195 active assets with the lowest beta, 34 of which have negative beta. By contrast, 3299 assets (66\%) have positive weight in the GMV portfolio.
}
    \label{fig:long_only_long_short}
\end{figure}

There are several reasons for our focus on the 1-factor model case.  It has been a long-accepted base case of study ever since \citet{sharpe1963}. It remains a workhorse in practice because the dominant eigenvalue  of most equity covariance matrices captures a large fraction of the total variance, and so, as with the capital asset pricing model, it serves as a useful initial approximation to real markets.
The special tractability of the 1-factor model is useful in identifying phenomena present in more complicated models, in which the relationships between portfolio weights, risk, and their dependence on parameters are much less transparent.  Understanding of the behavior of more descriptive multi-factor market models must begin with the clearest possible picture in the 1-factor case.  This is the context for the present article.

The contributions of this article in this general 1-factor setting may be summarized as follows:
\begin{itemize}
    \item We provide a new explicit and easily computable solution (Theorem \ref{thm:1}) for the LOMV problem in terms of the parameters $\sigma^2, \beta_i, \delta_i$, along with a rigorous proof that applies to a general 1-factor model without requiring that the betas are all positive.  This answers an open question stated in \citep{qi2021} concerning the extension to mixed-sign betas. Proposition \ref{thm:wL} describes a general reduction of the problem to identifying the active set $K$, and then Theorem \ref{thm:1} determines 
    the active set by means of a threshold on the ordered
betas, identified directly from the sign change of a
computable sequence, without the need to solve a fixed-point
equation.  
    \item We provide a new rigorous analysis of the asymptotic proportion $k_p/p$ of active assets (the {\em active ratio}) in the LOMV portfolio in terms of the asymptotic distribution $F$ of the betas (Theorem \ref{thm:kp}). 
    To our knowledge, this problem was first addressed in \citep{bernstein2025}. We show  that in almost all cases,
    \[ \lim_{p \to \infty} k_p/p = F(\beta^*),\]
    where $\beta^* \geq 0$ is the root of an explicit integral equation determined by $F$, but there are some cases where the limit does not exist. With some moment and concentration bounds on the distribution of betas, we also
    show that the asymptotic active ratio
    tends to zero as the fraction of non-positive betas tends to zero
     (Theorem \ref{thm:convergence}), and provide a cube-root quantitative rate bound for the convergence. 

\end{itemize}

There is a substantial literature on the LOMV portfolio in general and in the one-factor setting, e.g. \citep{green-holliffield1992, best1992, jagann2003, cst2011, qi2021}, and references therein.   \citet{green-holliffield1992} show that the dominance of a single factor in asset returns drives the unconstrained long-short GMV portfolio to take extreme long and short positions.  This suggests that the long-only constraints bind significantly when present.  Our results serve as a rigorous quantitative justification for this conclusion. Since negative market factor loadings are rare empirically, Theorems \ref{thm:kp} and \ref{thm:convergence} explain why the long-only constraint binds on most of the available assets.

\citet{best1992} obtain conditions for when portfolios on the unconstrained efficient frontier (mean-variance efficient portfolios) can have all positive weights, showing this is possible but the conditions permitting it are fragile, and increasingly so as $p$ grows.  Theorem \ref{thm:1} of the present paper carries this forward for the 1-factor GMV problem by providing an explicit criterion for all positive weights: this will occur when the sequence $\{R_i\}$ of Theorem \ref{thm:1} remains positive.

\citet{jagann2003} show that imposing no-short-sale
constraints on the sample covariance matrix is equivalent, via the KKT
conditions, to solving the unconstrained problem with a modified
covariance matrix $\tilde{\Sigma}$ whose off-diagonal entries are shrunk
toward zero for constrained assets. Constructing $\tilde{\Sigma}$ requires knowing the KKT multipliers, which implies knowing the active set $K$, and so this formulation 
does not overcome the need to identify $K$, as we do with Theorem \ref{thm:1}.

Our work was initially inspired by \citep{cst2011}. They study the LOMV in a single-factor market, and provide a semi-explicit solution in terms of a threshold beta defined by a fixed point equation.
\citet{qi2021} points out that their argument is only ``semi-formal'', and makes unacknowledged use of the assumption that all the betas are positive.   \citet{qi2021} makes their analysis mathematically rigorous, studying the fixed point problem and providing proof of existence and uniqueness. However, that paper is unable to remove the restriction $\beta_i >0$ and states as an open problem how to extend the solution to the case when betas can have mixed signs. The present
paper resolves this open question: Theorem~\ref{thm:1} provides a complete,
rigorous solution for the LOMV problem under the 1-factor model with
betas of arbitrary sign, showing that the active set $K$ is still
determined by a single threshold on the ordered betas, identified
directly from the sign changes of an explicit sequence $\{R_i\}$, without
the need to solve a fixed point equation.
This alternative approach is also instrumental in the proofs of Theorems \ref{thm:kp} and \ref{thm:convergence}.

The problem of $k/p$ asymptotics was introduced to us in the unpublished article \citep{bernstein2025},  which contains a similar statement as our Theorem \ref{thm:kp} but with a different formulation and proof. Their statement employs a stronger finite variance assumption and omits the cases where the limit of $k/p$ fails to exist. That paper additionally contains a finer-grained analysis of the convergence rate of $k/p$ in the case where that limit is zero -- a topic we do not address.  Our 
proof of Theorem \ref{thm:kp} uses different methods making use of the analysis in Theorem \ref{thm:1}.
 Our Theorem \ref{thm:convergence} is novel.

The remainder of this article is organized as follows.  Section \ref{sec:main-results} includes more detailed discussion and precise statements of all the theorems. Section \ref{sec:numerics} illustrates our results with some simulation experiments. Proofs of the theorems appear in subsequent sections, with Proposition \ref{thm:wL} and Theorem \ref{thm:1} proved in Section \ref{sec:proof12}, and Theorems \ref{thm:kp} and \ref{thm:convergence} in Section \ref{sec:proof3}. 

\section{Main Results} \label{sec:main-results}

We state the results outlined above in more detail in this section.  The proofs of the following theorems appear in subsequent sections.

\subsection{Determining the LOMV portfolio}

We first state a general proposition describing the LOMV portfolio for any positive definite covariance matrix, without regard to whether it comes from a factor model.

\begin{proposition}
\label{thm:wL}
 Suppose the covariance matrix $\Sigma$ of returns for a universe of $p$ assets is an arbitrary $p \times p$ symmetric positive definite matrix. Denote by $w^L$ the solution of problem \eqref{eq: prob 1} and let $K$ denote the set of active assets in $w^L$:
    \begin{equation}
        K = \{i \leq p : w^L_i > 0\},
    \end{equation}
and $k = |K| \leq p$, the number of active assets.
 Let $\Sigma^{K,0}$ be the modified matrix obtained from $\Sigma$ by setting to zero the rows and columns not belonging to $K$.

Then  
\begin{equation}
    w^L  = \frac{(\Sigma^{K,0})^+\bfo_p}{\bfo_p^\top(\Sigma^{K,0})^+\bfo_p}
\end{equation}
where $^+$ denotes the Moore-Penrose inverse.\footnote{For a symmetric matrix $S$ with diagonalization $S = UDU^\top$ for orthogonal $U$ and diagonal $D$, the Moore-Penrose inverse is defined by $S^+ = UD^+U^\top$, where the diagonal matrix $D^+$ is obtained by replacing each nonzero element by its inverse, and leaving the zero entries unchanged. When $S$ is positive definite, this coincides with the usual matrix inverse.}

Equivalently, if we denote by  $w^K$ the $k$-vector obtained by deleting all the zero entries of the $p$-vector $w^L$, and $\Sigma^K$ denotes the $k \times k$ matrix obtained from $\Sigma$ by deleting the rows and columns not belonging to $K$, then
\begin{equation}
    w^K  = \frac{(\Sigma^K)^{-1}\bfo_k}{\bfo_k^\top(\Sigma^K)^{-1}\bfo_k}.
\end{equation}
This is the unique solution of the $k$-dimensional long-short problem \eqref{eq:LSprob} with
$\Sigma$ replaced by $\Sigma^K$, and $w^L$ can be recovered from $w^K$ by adding back zero entries for the deleted assets.
\end{proposition}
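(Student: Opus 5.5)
We begin with the reduced $k$-dimensional statement and then translate it back to the Moore--Penrose form. Since $K$ is the active set of $w^L$, the vector $w^K$ (the positive entries of $w^L$) lies in the relative interior of the simplex face $\{v\in\R^k: v^\top\bfo_k=1, v\ge 0\}$. For any $v\in\R^k$ with $v^\top \bfo_k=1$, let $\tilde v\in\R^p$ be $v$ padded with zeros outside $K$. Then $\tilde v^\top\Sigma\tilde v = v^\top\Sigma^K v$.

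First we show that $w^K$ is a local minimizer of the equality-constrained problem $\min v^\top\Sigma^K v$ subject to $v^\top\bfo_k=1$. Take any $v$ with $v^\top \bfo_k=1$. Because all entries of $w^K$ are strictly positive, for small $t>0$ the point $w^K+t(v-w^K)$ is nonnegative and sums to one. Its padding is therefore feasible for \eqref{eq: prob 1}, so its objective value is at least that of $w^L$. Hence the convex quadratic $g(t)=q(w^K+t(v-w^K))$, with $q(v)=v^\top\Sigma^K v$, satisfies $g(t)\ge g(0)$ for small $t>0$. Convexity of $g$ then gives $g(1)\ge g(0)$, that is, $q(v)\ge q(w^K)$. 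Alternatively, one can note that $g'(0)=0$ by applying the same argument in both directions $\pm(v-w^K)$, which is possible since $w^K$ is interior to the face.

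The matrix $\Sigma^K$ is a principal submatrix of a positive definite matrix, so it is positive definite and hence invertible. The $k$-dimensional long-short problem \eqref{eq:LSprob} with $\Sigma^K$ therefore has the unique solution $(\Sigma^K)^{-1}\bfo_k/(\bfo_k^\top(\Sigma^K)^{-1}\bfo_k)$, by the standard Lagrange multiplier computation recalled for $w^{LS}$ in the introduction. By the previous step, $w^K$ equals this solution. This gives the second formula, and recovering $w^L$ by adding back zeros is immediate.

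To obtain the first formula, permute coordinates so that $K$ comes first; this permutation is orthogonal and commutes with the Moore--Penrose inverse. In that ordering $\Sigma^{K,0}=\mathrm{diag}(\Sigma^K,0)$. Diagonalizing $\Sigma^K=VDV^\top$ gives $\Sigma^{K,0}=\mathrm{diag}(V,I)\,\mathrm{diag}(D,0)\,\mathrm{diag}(V,I)^\top$, so by the footnote's definition $(\Sigma^{K,0})^+=\mathrm{diag}((\Sigma^K)^{-1},0)$. It follows that $(\Sigma^{K,0})^+\bfo_p$ is $(\Sigma^K)^{-1}\bfo_k$ padded with zeros, and $\bfo_p^\top(\Sigma^{K,0})^+\bfo_p=\bfo_k^\top(\Sigma^K)^{-1}\bfo_k$. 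The first formula follows.

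The only delicate step is the local-to-global argument in the second paragraph: one must use that every active weight is strictly positive, so that feasible perturbations in every direction of the hyperplane exist. Convexity, or a first-order condition, then upgrades this local statement to global optimality. The rest is routine linear algebra.
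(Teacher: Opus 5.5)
Your proof is correct, but it takes a different route from the paper's.

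The paper invokes the KKT conditions for the full long-only problem. Complementary slackness gives $\lambda_i=0$ for $i\in K$, so the $K$-rows of the stationarity equation read $2\Sigma^K w^K+\nu\bfo_k=\bfz_k$. Solving this together with $\bfo_k^\top w^K=1$ gives $\nu=-2/(\bfo_k^\top(\Sigma^K)^{-1}\bfo_k)$ and the formula for $w^K$. The Moore--Penrose form is then simply asserted as equivalent.

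You instead perturb $w^K$ within the affine hull of the face and use convexity along segments. This shows directly that $w^K$ globally minimizes $v^\top\Sigma^K v$ on $\{v^\top\bfo_k=1\}$, and you then appeal to uniqueness of the long-short minimizer for the positive definite $\Sigma^K$.

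Your route is more elementary and self-contained. It needs no appeal to the necessity of the KKT conditions, and hence no constraint qualification. It also justifies the claim that $w^K$ is \emph{the unique} solution of the $k$-dimensional long-short problem. Finally, it actually verifies $(\Sigma^{K,0})^+=\mathrm{diag}((\Sigma^K)^{-1},0)$ after permutation, rather than leaving this implicit.

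What the paper's route buys is the multiplier machinery. The same KKT system, with $\nu$ as computed and $\lambda_i=2(\Sigma w)_i+\nu\ge 0$ for $i\notin K$, is reused in the proof of Theorem~\ref{thm:1}. There it yields the inequality $B_K\le\beta_i C_K$ for inactive assets, which is half of the key dichotomy lemma. Your perturbations stay inside the face, so they carry no information about inactive coordinates. To recover that inequality you would need one-sided perturbations shifting weight from an active asset to an inactive one, which amounts to rederiving $\lambda_i\ge 0$.
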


Proposition \ref{thm:wL} allows us to determine the long-only minimum risk portfolio as soon as we know the set $K$ of active assets. It remains to determine $K$, and for the single-factor case this can be accomplished by an explicit method described next in Theorem \ref{thm:1}.
\medskip

We henceforth restrict to the case of a single factor covariance matrix $\Sigma$ of the form
\begin{equation} \label{eq:singlefactor}
\Sigma = \sigma^2 \boldsymbol{\beta} \boldsymbol{\beta}^\top + \Delta,
\end{equation}
as discussed above.

We permit the entries $\beta_i$ of $\boldsymbol{\beta}$ to be positive, negative, or zero, but not all zero, which represents the general setting of a 1-factor model of returns.
Since the vector $\boldsymbol{\beta}$ may be replaced by $-\boldsymbol{\beta}$ without changing $\Sigma$, without loss of generality we choose the sign so that 
\begin{equation} \label{eq:positive}
 \sum_{i=1}^p \frac{\beta_i}{\delta_i^2} \geq 0, 
\end{equation}
as would be the case, for example, if all the betas were positive.

In addition, by re-ordering the assets if necessary, for convenience we further assume without loss of generality that the betas are arranged in increasing order:
\begin{equation}
    \beta_1 \leq \beta_2 \leq \cdots \leq \beta_p.
\end{equation}

With these assumptions, our first main result is
\begin{theorem} \label{thm:1}
     (\textbf{Explicit Solution to the long-only minimum variance problem under a general 1-factor model})
     Suppose the covariance matrix $\Sigma$ has the 1-factor form
     \eqref{eq:singlefactor}.
     
\begin{enumerate}
    \item Let $R_1 = \frac{1}{\sigma^2}$, and, for  $2 \leq i \leq p$, let 
\[
R_i = {\frac{1}{\sigma^2} + \sum_{j=1}^{i-1}\frac{\beta_j}{\delta_j^2}(\beta_j - \beta_i)} = {\frac{1}{\sigma^2} + \sum_{j=1}^{i}\frac{\beta_j}{\delta_j^2}(\beta_j - \beta_i)} .
\]
Then there exists $s \leq p$ such that the initially positive sequence $\{R_i\}$ is monotonically increasing until $i = s$, then monotonically  decreasing for $i > s$. That is,
\begin{equation}
    0 < R_1 \leq R_2 \leq \cdots \leq R_s \geq R_{s+1} \geq \cdots \geq R_p.
\end{equation}
In particular, if
\begin{equation}
    \ell = \max \big\{i: R_i >0\big\},
\end{equation}
then $i \leq \ell$ if and only if $R_i > 0$, and
the
sequence $\{R_i\}$ crosses zero at most once.    

\item Let $w^L$ denote the LOMV solution of problem \eqref{eq: prob 1}, $K = \{i \in P : w^L_i >0\}$ denote the set of active assets, and let $k = |K|$ denote the number of active assets.  

Then
\begin{equation} \label{eq:kdef}
   k = \ell \text{ and } K = \{1,2,\dots, \ell\}.
\end{equation}

\end{enumerate}

Let $\Sigma^K$ be the $k \times k$ submatrix of $\Sigma$ consisting of the first $k$ rows and columns,
and let 
\begin{equation} \label{eq:closed_solution}
    w^K = \frac{ (\Sigma^{K})^{-1}\bfo_k}{\bfo_k^\top(\Sigma^{K})^{-1}\bfo_k},
\end{equation}
the long-short fully invested minimum variance solution for the first $k$ assets.

Then, by Proposition \ref{thm:wL}, the solution $w^L = (w_1^L,\dots,w_p^L)^\top$ of  \eqref{eq: prob 1}   is given by
 \begin{equation}
            \begin{split}
                &w^L_i = w^K_i \text{  for $i=1,...,k$}\\
                &w^L_i=0 \text{ for  } i=k+1,\dots,p.\\
            \end{split}
\end{equation}
\end{theorem}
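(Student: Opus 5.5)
Part 1 is handled by a direct difference computation, and part 2 by the KKT conditions of problem \eqref{eq: prob 1}. For part 1, write $a_j = \beta_j/\delta_j^2$. Using the second form of $R_i$, one checks
\[
R_{i+1}-R_i = -(\beta_{i+1}-\beta_i)\sum_{j=1}^{i} a_j .
\]
Since $\beta_{i+1}\ge\beta_i$, the sign of the increment is the sign of $-S_i$, where $S_i=\sum_{j\le i}a_j$. The partial sums $S_i$ first decrease while the (ordered) betas are negative and then increase once they are positive. Hence $S_i$ is negative for $i$ below some index $s$ and nonnegative afterwards. This uses \eqref{eq:positive}, $S_p\ge 0$, and $\beta$'s ordering: if $S_i\ge0$ and $\beta_{i+1}\ge0$, then $S_{i+1}\ge0$; and $S_i<0$ forces $\beta_i<0$, so all earlier partial sums are also negative. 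Consequently $R$ increases up to $s$ and decreases afterwards, and since $R_1=1/\sigma^2>0$, the set $\{i:R_i>0\}$ is an initial segment $\{1,\dots,\ell\}$.

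For part 2 we use the KKT conditions, which are necessary and sufficient by strict convexity. They say $w$ is optimal iff there is $\lambda$ with $(\Sigma w)_i=\lambda$ for $w_i>0$ and $(\Sigma w)_i\ge\lambda$ for $w_i=0$. The plan is to exhibit the candidate $w$ supported on $\{1,\dots,\ell\}$ given by Proposition \ref{thm:wL}'s formula and verify these conditions; uniqueness then yields $K=\{1,\dots,\ell\}$. On a support $J$, the equations $\sigma^2\beta_i(\beta^\top w)+\delta_i^2 w_i=\lambda$ give
\[
w_i=\frac{\lambda-\sigma^2\beta_i m}{\delta_i^2},\qquad m=\beta^\top w .
\]
Solving for $m$ self-consistently on $J=\{1,\dots,\ell\}$ gives $m(1+\sigma^2\sum_J \beta_j a_j)=\lambda\sum_J a_j$. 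Hence $w_i\propto \delta_i^{-2}\bigl(1/\sigma^2+\sum_{j\in J}a_j(\beta_j-\beta_i)\bigr)$, up to the positive factor $\lambda\sigma^2/(1+\sigma^2\sum_J\beta_ja_j)$. Here $\lambda>0$ because $\lambda = w^\top\Sigma w>0$.

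The main obstacle is sign control. We need the bracket $T_i=1/\sigma^2+\sum_{j\le\ell}a_j(\beta_j-\beta_i)$ to be positive for $i\le\ell$, and the reverse inequality $\lambda-\sigma^2\beta_i m\le 0$ for $i>\ell$ (equivalently $T_i\le0$). To handle this, note that $T_i$ as a function of $i$ is affine in $\beta_i$ with slope $-S_\ell$. Moreover $T_\ell=R_\ell>0$ and $T_{\ell+1}=R_{\ell+1}\le 0$.

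First consider the case $S_\ell\ge 0$. Then $T$ is nonincreasing in $\beta_i$, so $T_i\ge T_\ell>0$ for $i\le\ell$ and $T_i\le T_{\ell+1}\le0$ for $i>\ell$, as required.

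Next consider $S_\ell<0$. By part 1, $R$ is then still increasing at $\ell$, so $R_{\ell+1}\ge R_\ell>0$, contradicting the maximality of $\ell$ unless $\ell=p$. When $\ell=p$ there are no inactive indices. For $i\le p$ we have $T_i=R_i+\sum_{j=i+1}^{p}a_j(\beta_j-\beta_i)$, and I would show this is positive directly from the unimodality of $R$: all $R_i>0$, and the tail sums can be shown nonnegative using $S_p\ge0$.

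I also need the positivity of the normalizing factor $1+\sigma^2\sum_J\beta_ja_j$, which is to be checked alongside. Then full investment fixes $\lambda>0$, and the KKT verification completes the proof; the formula for $w^K$ follows from Proposition \ref{thm:wL}.
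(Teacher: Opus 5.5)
Your proof is correct, and for Part~2 it takes a genuinely different route from the paper.

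\textbf{Part 1.} This is the paper's argument: the increment identity $R_{i+1}-R_i=(\beta_i-\beta_{i+1})S_i$ combined with the V-shape of the partial sums. Two small corrections. Unimodality does not need \eqref{eq:positive} at all. Your side remark that $S_i<0$ forces $\beta_i<0$ is false: with $\beta=(-10,1,20)$ and $\delta_j=1$ one has $S_2=-9<0$ but $\beta_2=1$. The V-shape you state first already shows that $\{i:S_i<0\}$ is an initial segment, so nothing is lost.

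\textbf{Part 2.} The paper works from \emph{necessary} KKT conditions at the unknown optimum. It proves the dichotomy $i\in K\iff B_K>\beta_iC_K$, where $B_K=1/\sigma^2+\sum_{K}\beta_j^2/\delta_j^2$ and $C_K=\sum_K\beta_j/\delta_j^2$. It then uses \eqref{eq:positive} in a contradiction argument to get $C_K\ge 0$ and deduces that $K$ is an initial segment. Only after that does it identify $k=\ell$, from $R_k=B_K-\beta_kC_K>0\ge R_{k+1}$ and unimodality.

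You go the other way. You start from $\ell$, take the long-short minimizer on $\{1,\dots,\ell\}$, and check the \emph{sufficient} KKT conditions. All the sign bookkeeping reduces to the affine function $T_i$ of $\beta_i$, with slope $-S_\ell$, anchored at $T_\ell=R_\ell>0$ and $T_{\ell+1}=R_{\ell+1}\le 0$.

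What each buys:
\begin{itemize}
\item Your verification is shorter. It needs neither the dichotomy lemma nor the $C_K<0$ contradiction, but it presupposes the answer.
\item The paper's derivation explains why the active set must be a beta threshold, and it yields Corollary~\ref{cor:betathreshold} directly. You do recover the same threshold from the sign of $T_i$ when $S_\ell>0$.
\end{itemize}

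\textbf{Loose ends.} Two of them are easier than you suggest. First, $1+\sigma^2\sum_J\beta_j^2/\delta_j^2\ge 1$ trivially. Second, the case $S_\ell<0$ is vacuous: you showed it forces $\ell=p$, and then $S_p<0$ contradicts \eqref{eq:positive} outright. No tail-sum argument is needed there, and none could work without invoking the convention. For $\beta=(-3,-2,-1)$ with $\delta_j=\sigma=1$, all $R_i>0$, yet the candidate on all of $P$ has $T_1=-3<0$.
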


We note that in any case $k>0$, and $k = p$ if the long-short fully invested minimum variance portfolio happens to be long-only already. Otherwise, $k < p$, and $k$ has the property 
\begin{equation}
 R_k > 0 \text{ and }    R_{k+1} = R_k + (\beta_k-\beta_{k+1})\sum_{j=1}^k \frac{\beta_j}{\delta_j^2} \leq 0.
\end{equation}
This means that the threshold index $k$ and the solution $w$ are not influenced by the values of $\delta_j$ for $j>k$, nor by the values of $\beta_j$ for $\beta_{j} > \beta_{k+1}$. 
In particular, this implies
\begin{corollary}
    With the notation and assumptions as in the previous theorem, suppose we add $q>0$ additional assets to the market, resulting in an enlarged set of $p+q$ total assets.  As long as the betas of the added assets are greater than or equal to $\beta_{k+1}$, the active set $K$ and the weights of the active assets in $w^L$ are unaffected.
\end{corollary}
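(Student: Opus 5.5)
The plan is to apply Theorem~\ref{thm:1} to the enlarged market and show that its threshold index is still $k$. First I set up the enlarged market. Order the $p+q$ assets so that the original ordering of $\beta_1,\dots,\beta_p$ is preserved. The new assets all have betas $\ge \beta_{k+1}$, so they can be placed after position $k$, with ties broken by putting original assets first. Then the first $k+1$ assets of the enlarged ordering are exactly the original assets $1,\dots,k+1$. This holds whenever $k<p$. If $k=p$, the first $k$ positions are the original assets, and position $k+1$ holds a new asset whose beta is at least $\beta_p$.

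Next I check the sign normalization \eqref{eq:positive}. $R_k>0$, and the computation displayed before the corollary shows that $R_k$ involves only $\beta_1,\dots,\beta_k$. So it suffices that $S_k=\sum_{j\le k}\beta_j/\delta_j^2\ge 0$. I would derive this from the unimodality in part 1 of Theorem~\ref{thm:1}, using $R_{i+1}-R_i=(\beta_i-\beta_{i+1})S_i$. Here $S_p\ge 0$, and $S_i$ is decreasing while $\beta_i<0$ and increasing afterwards. Hence $S_i<0$ can only hold on an initial segment where all the betas are negative, and on that segment $R$ is increasing. If $R_{k+1}\le 0<R_k$ with $\beta_{k+1}>\beta_k$, then $S_k>0$. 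If $\beta_{k+1}=\beta_k$, then $R_{k+1}=R_k>0$, which contradicts the definition of $k$. So when $k<p$ we get $S_k\ge 0$.

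I expect this to be the main obstacle: showing that adding assets does not force a flip of $\boldsymbol\beta\mapsto-\boldsymbol\beta$. Adding assets with large positive betas only increases $\sum\beta_i/\delta_i^2$, but new assets with negative betas could, in principle, make the total sum negative. I would handle this by observing that all new betas are $\ge\beta_{k+1}$, and in the case $k<p$ we have $\beta_{k+1}>\beta_k$. The total then equals $S_k$ plus terms with betas $>\beta_k$. The unimodality argument shows that once the partial sum is nonnegative and the betas have passed the region of negative betas, the sums stay nonnegative. If some $\beta_{k+1}<0$, more care is needed. In that case I would instead note that the LOMV problem is invariant under the sign flip, and rely on the sequence-$R$ characterization together with KKT uniqueness.

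Finally I compute the enlarged sequence $\tilde R_i$. For $i\le k+1$ it coincides with $R_i$, because $\tilde R_i$ depends only on the first $i$ assets. Thus $\tilde R_k>0\ge\tilde R_{k+1}$. By the single-crossing property in Theorem~\ref{thm:1}, the enlarged threshold satisfies $\tilde\ell=k$. When $k=p$ we need $\tilde R_{p+1}\le 0$, which is not automatic. In that case the hypothesis should be read with $\beta_{k+1}$ undefined, so the corollary concerns the case $k<p$. Given $\tilde\ell=k$, part 2 of Theorem~\ref{thm:1} gives active set $K=\{1,\dots,k\}$. The weights are given by \eqref{eq:closed_solution} using the same $\Sigma^K$, so they are unchanged.
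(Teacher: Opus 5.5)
Your approach is the paper's. The paper simply observes that $R_k>0\ge R_{k+1}$ involve only the first $k+1$ assets. Hence in the enlarged, re-sorted market $\tilde R_i=R_i$ for $i\le k+1$, single crossing gives $\tilde\ell=k$, and the weights come from the same $\Sigma^K$. Your restriction to $k<p$ is also the right reading, since $\beta_{k+1}$ is otherwise undefined.

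\textbf{The gap.} Your write-up is incomplete at the step you call the main obstacle. You leave the case $\beta_{k+1}<0$ open and propose a sign-flip/KKT fallback. That fallback would not work as stated: flipping $\boldsymbol\beta$ reverses the ordering, so the $R$-sequence of the flipped market has nothing to do with the original one.

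\textbf{The fix.} The case cannot occur, and you already have what you need. From $R_{k+1}\le 0<R_k$ you derived $S_k=\sum_{j\le k}\beta_j/\delta_j^2>0$. This forces some $\beta_j>0$ with $j\le k$, hence $\beta_k>0$ by the ordering, and so $\beta_{k+1}>\beta_k>0$. Every added asset therefore has positive beta. So $\sum_{i=1}^{p+q}\beta_i/\delta_i^2$ strictly exceeds the original sum, which is $\ge 0$, and \eqref{eq:positive} holds for the enlarged market. Part~2 of Theorem~\ref{thm:1} then applies; Part~1, the single crossing, needs no sign convention at all. The paper never spells this check out, so including it is worthwhile once it is closed this way.

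\textbf{A minor error.} Your claim that $S_i<0$ can hold only on an initial segment of negative betas is false. Take $\beta=(-10,1,\dots,1)$ with $p\ge 11$ and $\delta_i\equiv 1$: then $S_2<0$ while $\beta_2>0$. Nothing in your argument depends on this claim, since you get $S_k>0$ directly from $R_{k+1}-R_k=(\beta_k-\beta_{k+1})S_k$.
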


\citet{qi2021} has a related discussion.

The closed form solution \eqref{eq:closed_solution} depends on first determining $k$ from \eqref{eq:kdef}. From the monotonicity properties of $\{R_i\}$, this may be quickly accomplished, for example, by the bisection method in $O(\log p)$ steps.

A solution of problem \eqref{eq: prob 1} in semi-explicit form was described for the positive beta case in \citep{cst2011}. They give the following condition:
\begin{equation} \label{eq:thresh}
    w_i > 0 \text{ if and only if } \beta_i < \tau,
\end{equation}
where $\tau$ is the solution of the equation
\begin{equation}
    \tau = \frac{\frac{1}{\sigma^2}+\sum_{\beta_j<\tau} \frac{\beta^2_j}{\delta_j^2}}{\sum_{\beta_j< \tau}\frac{\beta_j}{\delta_j^2}}.
\end{equation}
The active set described this way is a consequence of the following corollary of the proof of Theorem \ref{thm:1}:
\begin{corollary} \label{cor:betathreshold}
    Either $K=P$ and the long-short solution $w^{LS}$ is already long-only ($w^{LS} = w^L$), or else $K \neq P$, 
    $\sum_{j=1}^k \frac{\beta_j}{\delta_j^2} > 0$, and
    \begin{equation}
        w_i > 0 \text{ if and only if } 
        \beta_i < 
        \frac{ \frac{1}{\sigma^2} + \sum_{j=1}^k \frac{\beta_j^2}{\delta_j^2}}{\sum_{j=1}^k \frac{\beta_j}{\delta_j^2}}.
    \end{equation}
\end{corollary}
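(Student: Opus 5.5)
We derive Corollary \ref{cor:betathreshold} directly from Theorem \ref{thm:1}, whose conclusion $K=\{1,\dots,\ell\}$ with $\ell=\max\{i:R_i>0\}$ we take as given. Set $k=\ell$ and
\[
A_k=\sum_{j=1}^k \frac{\beta_j}{\delta_j^2}, \qquad B_k=\frac{1}{\sigma^2}+\sum_{j=1}^k\frac{\beta_j^2}{\delta_j^2}>0 .
\]
The key identity, read off from the second expression for $R_i$ in Theorem \ref{thm:1}, is
\[
R_i = B_{i}-\beta_i A_{i},
\]
where $B_i,A_i$ are the analogous partial sums up to $i$. More usefully, for any fixed $k$ we may also write, for every $i$,
\[
\tilde R_i(k):=B_k-\beta_i A_k .
\]
The plan is to show that for $k=\ell<p$ the sign of $\tilde R_i(k)$ coincides with the sign pattern of $R_i$, and hence with membership in $K$.

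If $K=P$, then $w^L$ has all positive weights. By Proposition \ref{thm:wL} with $\Sigma^K=\Sigma$, it equals $w^{LS}$, so the first alternative holds.

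Otherwise $k<p$ and $R_{k+1}\le 0<R_k$. Using the recursion $R_{k+1}=R_k+(\beta_k-\beta_{k+1})A_k$ stated after Theorem \ref{thm:1}, we get $(\beta_{k+1}-\beta_k)A_k\ge R_k>0$. Hence $\beta_{k+1}>\beta_k$ and $A_k>0$, which is the claimed positivity of $\sum_{j\le k}\beta_j/\delta_j^2$.

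Next I would note that $\tilde R_i(k)$ is affine in $\beta_i$ with negative slope $-A_k$. Its zero is at $\tau_k=B_k/A_k$, so $\tilde R_i(k)>0$ iff $\beta_i<\tau_k$. It remains to check that $\beta_i<\tau_k$ exactly for $i\le k$.
\begin{itemize}[label={}]
\item \emph{Case $i\le k$.} The second formula for $R_k$ gives $\tilde R_k(k)=R_k>0$, so $\beta_k<\tau_k$. Since the $\beta_i$ are ordered, $\beta_i\le\beta_k<\tau_k$ for all $i\le k$.
\item \emph{Case $i\ge k+1$.} Since the terms with $j\le k$ are unchanged, $R_{k+1}=\frac1{\sigma^2}+\sum_{j\le k}\frac{\beta_j}{\delta_j^2}(\beta_j-\beta_{k+1})=\tilde R_{k+1}(k)$. (The $j=k+1$ term vanishes.) Thus $\tilde R_{k+1}(k)\le 0$, i.e.\ $\beta_{k+1}\ge\tau_k$, and by ordering $\beta_i\ge\tau_k$ for all $i>k$.
\end{itemize}
Combining the two cases with $K=\{1,\dots,k\}$ yields $w_i>0$ iff $\beta_i<\tau_k$.

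The main subtlety is ties among betas: assets with $\beta_i=\beta_k$ but $i>k$ would violate the claim. The computation $\beta_{k+1}-\beta_k>0$ above rules this out, so the threshold is strict and well defined. Everything else is bookkeeping with the two expressions for $R_i$.
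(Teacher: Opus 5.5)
Your proof is correct, but it runs in the opposite direction from the paper's.

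\textbf{The paper's route.} The corollary is read off directly from the two lemmas inside the proof of Theorem~\ref{thm:1}. The KKT dichotomy (Lemma~\ref{lem:dichotomy}), $i\in K$ iff $B_K>\beta_i C_K$, is already the threshold statement once the sign of $C_K$ is known. The second lemma supplies $C_K\ge 0$, using the sign convention \eqref{eq:positive}, and shows that $C_K=0$ forces $K=P$. So $K\neq P$ gives $C_K>0$, and dividing by $C_K$ finishes. The paper then uses the same lemmas to get $R_k>0\ge R_{k+1}$ and hence $k=\ell$.

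\textbf{Your route.} You treat Theorem~\ref{thm:1} as a black box and work backwards.
\begin{enumerate}
\item From $k=\ell<p$ you get $R_k>0\ge R_{k+1}$.
\item The recursion $R_{k+1}=R_k+(\beta_k-\beta_{k+1})C_K$ (your $A_k$ is the paper's $C_K$) then gives both $C_K>0$ and the strict gap $\beta_k<\beta_{k+1}$.
\item The identities $R_k=B_K-\beta_kC_K$ and $R_{k+1}=B_K-\beta_{k+1}C_K$ place the threshold $B_K/C_K$ in $(\beta_k,\beta_{k+1}]$.
\item The ordering of the betas then extends the conclusion to every $i$.
\end{enumerate}

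\textbf{What each buys.} The paper's route is shorter given the lemmas. It also shows that the threshold rule is a restatement of the KKT conditions, which is why it holds for each $i$ without appealing to the ordering. Your route shows that the corollary follows from the computable sign pattern of $\{R_i\}$ alone. The multipliers and the sign convention enter only through Theorem~\ref{thm:1}. It also makes explicit that no tie in beta can straddle the boundary of the active set.
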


This formulation requires the convention \eqref{eq:positive}, as a simple argument shows:
 In the absence of any hypotheses about the signs of the betas, replacing $\boldsymbol{\beta}$ by $-\boldsymbol{\beta}$ leaves the covariance matrix $\Sigma$, and hence the solution $w$, unchanged. But in that case the long positions of $w$ would correspond to betas above a threshold, not below it, contradicting the threshold condition \eqref{eq:thresh}.
The need for assumption \eqref{eq:positive} is easily overlooked because negative betas tend to be rare empirically.

\subsection{High dimensional asymptotics of the LOMV portfolio}

We have an interest in the proportion $k/p$ of active (non-zero weight) assets  in the LOMV portfolio in a universe of $p$ assets, for large $p$.  A natural approach to this problem is to study the asymptotic limit, if it exists.

We consider a sequence of LOMV problems, one for each dimension $p$, with
the following modeling assumptions:
\begin{itemize}
    \item The entries $\beta_i$ of the $p$-vector $\boldsymbol{\beta}$ are drawn iid
          from a distribution with finite mean and cdf $F$. We denote by $\beta$ a random variable with this distribution.
    \item The diagonal entries $\delta_i^2 = \Delta_{ii} > 0$ are drawn
          iid from an independent distribution with $\E[1/\delta^2] < \infty$.
    \item The distribution $F$ satisfies:
          \begin{enumerate}
              \item $\E[\beta] \geq 0$ \quad (non-negative mean), and
              \item $\E\!\left[\beta^2\,\mathbf{1}_{\{\beta \le 0\}}\right] < \infty$
                    \quad (finite negative-tail second moment).
          \end{enumerate}
\end{itemize}

The non-negative mean assumption is without loss of generality, since the
covariance matrix is unchanged if $\boldsymbol{\beta}$ is replaced by $-\boldsymbol{\beta}$.
The negative-tail second moment condition is mild: it holds whenever the
left tail of $F$ is sub-Gaussian, sub-exponential, or has any polynomial
decay faster than $|x|^{-3}$.  It is satisfied by finite variance $\E[\beta^2] <
\infty$ but does not require it; distributions with a heavy right tail (such as Pareto with index
$\alpha \in (1,2)$) are permitted. This allows the theorem to cover cases where the cross-sectional betas are leptokurtic in the right tail.

The ideas here can also be adapted to the case where the betas and deltas are correlated, which we omit.

Adapting the notation of previous sections,
let $k_p \leq p$ denote the number of active assets in the LOMV solution in dimension $p$, so that the remaining $p - k_p$ weights are zero.  We call the relative number $k_p/p$ of active assets in the LOMV portfolio the {\em active ratio}.

\begin{theorem}[Asymptotic active ratio]\label{thm:kp}
With the notation and assumptions above, define
\begin{equation}\label{eq:G}
    G(y) \;=\; \int_{-\infty}^{y}\bigl(x^2-yx\bigr)\,dF(x),
    \qquad
    \beta^* \;=\; \sup\bigl\{x\ge 0 : G(x)=0\bigr\}.
\end{equation}
$G$ is continuous and concave on $[0,\infty)$.
Then, almost surely:
\begin{enumerate}
    \item \textbf{(Negative betas, positive mean.)}
          If $P(\beta<0)>0$ and $\E[\beta]>0$, then $G$ has a unique
          zero $\beta^*>0$, and
          \begin{equation}\label{eq:exact}
              \liminf_{p\to\infty}k_p/p
              \;=\; F(\beta^{*-}) \;>\; 0,
              \qquad
              \limsup_{p\to\infty}k_p/p
              \;=\; F(\beta^*).
          \end{equation}
          In particular, $\displaystyle\lim_{p\to\infty} k_p/p$ exists
          if and only if $F$ is continuous at~$\beta^*$, in which case
        \begin{equation}
            \lim_{p \to \infty} k_p/p \;=\; F(\beta^*) \;> \;0.
        \end{equation}

    \item \textbf{(Negative betas, zero mean.)}
          If $P(\beta<0)>0$ and $\E[\beta]=0$, then $G(y)>0$ for all
          $y\ge 0$, and
          \[
              \lim_{p\to\infty} k_p/p \;=\; 1.
          \]

    \item \textbf{(Non-negative betas.)}
          If $P(\beta<0)=0$, then
          $\beta^*=\inf\{x\ge 0:F(x)>0\}$ is the left endpoint of
          the support of~$F$, and
          \[
              \lim_{p\to\infty} k_p/p \;=\; F(\beta^*)
              \;=\; P(\beta=\beta^*)
              \quad\text{a.s.}
          \]
          In particular, the limit is zero whenever $F$ has no atom
          at~$\beta^*$, and equals the atom mass otherwise.
\end{enumerate}
\end{theorem}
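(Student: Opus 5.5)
The plan is to translate the combinatorial description of $K$ in Theorem~\ref{thm:1} into an empirical-process statement about a random function, and then pass to the limit with the strong law of large numbers.

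\textbf{Step 1: reduce to an empirical function.} For $y\in\R$ set
\[
G_p(y)=\frac1p\sum_{j\le p}\mathbf 1_{\{\beta_j\le y\}}\,\frac{\beta_j^2-y\beta_j}{\delta_j^2}.
\]
With the betas sorted, terms with $\beta_j=\beta_i$ contribute zero, so
\[
R_i/p=\frac{1}{\sigma^2 p}+G_p(\beta_i).
\]
By Theorem~\ref{thm:1}, $R_i>0$ if and only if $i\le k_p$. Hence to bound $k_p/p$ from below it suffices to exhibit one index $i$ with $\beta_i$ near a level $y$ and $R_i>0$. Then every asset with a smaller beta is active, and $k_p/p\ge F_p(y^-)$, where $F_p$ is the empirical cdf. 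Symmetrically, one index with $R_i\le 0$ bounds $k_p/p$ from above. The ties $\beta_i=\beta_{i'}$ give $R_i=R_{i'}$, so an atom of $F$ is either entirely in $K$ or entirely out.

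Write $c=\E[1/\delta^2]$. By independence and the SLLN, $G_p(y)\to cG(y)$ a.s. for each fixed $y$. The summand is integrable because $|x^2-yx|\le x^2+|y||x|$ on $\{x\le 0\}$, which is controlled by the negative-tail second-moment assumption, and the summand is bounded on $0<x\le y$. I will upgrade this to a.s. uniform convergence on compact subsets of $[0,\infty)$. Each $G_p$ is a sum of functions that are Lipschitz in $y$ with constant $|\beta_j|/\delta_j^2$, whose average converges by the SLLN. So an equicontinuity argument on a countable dense set of $y$ gives the uniform statement. This is the technical core but should be routine.

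\textbf{Step 2: shape of $G$.} Since the integrand vanishes at $x=y$, $G'(y)=-\E[\beta\mathbf 1_{\{\beta\le y\}}]$, which is nonincreasing in $y$, so $G$ is concave on $[0,\infty)$. Also $G(0)=\E[\beta^2\mathbf 1_{\{\beta\le 0\}}]$.
\begin{enumerate}
\item If $P(\beta<0)>0$ and $\E\beta>0$, then $G(0)>0$ and $G'(y)\to-\E\beta<0$, so $G$ has a unique zero $\beta^*>0$. Moreover $G>0$ on $[0,\beta^*)$ and $G<0$ after it.
\item If $\E\beta=0$, then $G'(y)=\E[\beta\mathbf 1_{\{\beta>y\}}]\ge 0$ and $G(0)>0$, so $G>0$ on $[0,\infty)$.
\item If $\beta\ge0$ a.s., then $G=0$ on $[0,\beta^*]$, where $\beta^*$ is the left endpoint of the support, and $G<0$ beyond it.
\end{enumerate}

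\textbf{Step 3: the limits away from $\beta^*$.} Fix $y<\beta^*$, with $y\ge0$ and $F(y)>F(0^-)$ when possible. For large $p$ there is an index with $\beta_i\in(y-\epsilon,y]$, and uniform convergence gives $G_p(\beta_i)>0$ eventually. Hence $\liminf_p k_p/p\ge F(y-\epsilon)$; letting $y\uparrow\beta^*$ gives $\liminf_p k_p/p\ge F(\beta^{*-})$. In case 2 this argument runs over all $y$ and yields the limit $1$.

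For $y>\beta^*$, uniform convergence on $[y,y+1]$ shows that the first index with $\beta_i\ge y$ has $R_i<0$ for large $p$. (If no $\beta_i$ lies in $[y,y+1]$, use monotonicity of $R_i$ past its peak.) Hence $\limsup_p k_p/p\le F(y)$, and letting $y\downarrow\beta^*$ gives $\limsup_p k_p/p\le F(\beta^*)$.

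In case 3 with an atom at $\beta^*$, every asset with $\beta_j\le\beta^*$ has $\beta_j=\beta^*$. Those indices therefore have $G_p(\beta^*)=0$ exactly, so $R=1/\sigma^2>0$ and the whole atom is in $K$. This gives the limit $P(\beta=\beta^*)$.

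\textbf{Step 4 (main obstacle): the atom at $\beta^*$ in case 1.} Here I must show that the liminf equals $F(\beta^{*-})$ and the limsup equals $F(\beta^*)$. For the atom assets,
\[
R=\frac{1}{\sigma^2}+S_p,\qquad S_p=\sum_{j\le p}X_j,\qquad X_j=\mathbf 1_{\{\beta_j\le\beta^*\}}\,\frac{\beta_j^2-\beta^*\beta_j}{\delta_j^2}.
\]
The $X_j$ are iid with mean $cG(\beta^*)=0$. They are nondegenerate, since $X_j>0$ on $\{\beta_j<0\}$, which has positive probability. Because only a first moment is available, I would invoke the Chung--Fuchs theorem, or the oscillation of mean-zero random walks, instead of the LIL. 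It gives $\limsup_p S_p=+\infty$ and $\liminf_p S_p=-\infty$ a.s.

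Along the subsequence where $S_p>-1/\sigma^2$, the entire atom is active, so $k_p/p\ge F_p(\beta^*)\to F(\beta^*)$. Along the subsequence where $S_p\le-1/\sigma^2$, the atom is inactive, so $k_p/p\le F_p(\beta^{*-})\to F(\beta^{*-})$.

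Combined with Step 3, this yields \eqref{eq:exact}. It also shows that the limit exists exactly when $F$ has no atom at $\beta^*$.
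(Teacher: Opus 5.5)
Your proposal is correct in substance, and its skeleton matches the paper's. Both use the identity $R_i/p=1/(\sigma^2 p)+G_p(\beta_i)$, upgrade a pointwise SLLN to locally uniform convergence, squeeze $k_p/p$ between $F(\beta^{*-})$ and $F(\beta^*)$, and finish with an oscillation argument at an atom. The tools differ at three points.

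\textbf{Uniform convergence.} You use that $G_p$ is Lipschitz with constant $\frac{1}{p}\sum_j|\beta_j|/\delta_j^2\to\E|\beta|\,\E[\delta^{-2}]$. The paper instead uses concavity of $G_p$ on $(0,\infty)$ together with Rockafellar's Theorem~10.8. Your version needs no concavity and works on compacts of all of $\R$, including neighborhoods of $0$.

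\textbf{The bounds.} You test the sign of $R_i$ at selected indices, using the unimodality from Theorem~\ref{thm:1}. The paper introduces the random root $\beta^*(p)$ of $G_p$, shows $\beta^*(p)\to\beta^*$, and writes $k_p/p=F_p(\beta^*(p)^-)$. This is equivalent bookkeeping.

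\textbf{The atom step in Case~1.} This is the substantive difference. You use the oscillation of a nondegenerate mean-zero random walk (Chung--Fuchs); the paper applies a CLT to $\sqrt{p}\,H_p$. Your route is the sounder one. The CLT requires $\Var(X_j)<\infty$, i.e.\ $\E[\beta^4\mathbf{1}_{\{\beta\le 0\}}]<\infty$ and $\E[\delta^{-4}]<\infty$, and neither is assumed. Moreover, convergence in distribution alone only gives $P(H_p>0)\to 1/2$, so a Hewitt--Savage $0$--$1$ argument would still be needed for infinitely many sign changes a.s. Chung--Fuchs needs only $\E|X_j|<\infty$, which the hypotheses supply. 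The conclusions $\limsup S_p=+\infty$ and $\liminf S_p=-\infty$ also absorb the $1/\sigma^2$ offset automatically.

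Two repairs are needed.

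First, in Step~3 an index with $\beta_i\in(y-\epsilon,y]$ need not exist, since $F$ may put no mass there. Simply observe that eventually every $i$ with $\beta_i\le y$ has $R_i>0$. When $\beta_i<0$ this holds directly, since $R_i\ge 1/\sigma^2$. Otherwise it follows from uniform convergence on $[0,y]$, where $\min G>0$. Hence $k_p/p\ge F_p(y)$.

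Second, and more importantly, your Step~1 invokes Theorem~\ref{thm:1} in Case~2 without checking the sign convention \eqref{eq:positive}; the paper's Lemma~\ref{lem:activeset} has the same omission. In Case~2, $\sum_{i\le p}\beta_i/\delta_i^2$ is itself a nondegenerate mean-zero random walk. By the same oscillation fact you use in Step~4, it is negative for infinitely many $p$. For those $p$, Theorem~\ref{thm:1} must be applied to $-\boldsymbol\beta$, which makes the largest betas active. The conclusion survives, for the following reasons:
\begin{itemize}
\item The flipped limit $\E[\delta^{-2}]\,\E[\beta(\beta+y)\mathbf{1}_{\{\beta\ge -y\}}]$ is nondecreasing in $y\ge 0$.
\item It starts at $\E[\delta^{-2}]\,\E[\beta^2\mathbf{1}_{\{\beta\ge 0\}}]>0$.
\item If that value is infinite, the uniform-in-$t\in[0,y]$ lower bound $\frac{1}{p}\sum_{\beta_j\ge 0}\beta_j^2/\delta_j^2-\frac{y^2}{4}\,\frac{1}{p}\sum_j\delta_j^{-2}$ on the flipped $G_p$ suffices.
\end{itemize}
So along that subsequence, too, all assets with $\beta_i\ge -y$ are eventually active. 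This step has to be written out, however.
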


\begin{remark}\label{rem:convergence}
    In Cases~2 and~3, the limit $\lim k_p/p$ always exists.  In
    Case~1, the limit exists if and only if $F$ is continuous
    at~$\beta^*$; non-convergence requires $\beta^*$ (determined
    implicitly by $G(\beta^*)=0$) to be an atom of~$F$. An example of this is given in section \ref{subsec:example}, but it is rare. In particular, if $F$ is continuous, as is typical with model distributions, the limit in Case 1 exists. See also the Remark following the proof.
\end{remark}

Theorem \ref{thm:kp} tells us that the active ratio, when the universe of available assets is large, is determined by the distribution of the asset betas. In particular, a key factor is the proportion $F(0^-) = P(\beta < 0)$ of negative betas.

In the non-negative beta case $P(\beta < 0)=0$ treated by \citep{cst2011} and \citep{qi2021}, in the realistic case where $F$ has no atoms, Theorem \ref{thm:kp} tells us to expect that the active ratio asymptotically vanishes as $p \to \infty$. Even when $P(\beta <0)$ is positive but small, the result of the next theorem helps to explain the finding in  \citep{best1992} that the long-short minimum variance portfolio is very unlikely to be long-only already. 

In practice one may find the proportion $P(\beta \leq 0) = F(0)$ is small but positive, and any negative betas tend to be small in magnitude.  The next theorem shows, under reasonable assumptions, that the limiting active ratio $F(\beta^*)$ is small if the proportion $F(0)$ of non-positive betas is small.
More specifically, $F(\beta^*) = O(F(0)^{1/3})$, with the constant depending on distributional bounds.

\begin{theorem}[Continuity of the limiting active ratio at zero]
\label{thm:convergence}
Let $(F_n)$ be a sequence of probability distributions, and let $X$ denote
a random variable with distribution $F_n$.  Assume:
\begin{enumerate}
  \item \textbf{Vanishing mass on $(-\infty, 0]$:}
        $0 < F_n(0) \to 0$ as $n\to\infty$.

  \item \textbf{Uniformly positive mean:}
        $\mu_n := \E_{F_n}[X] \ge \mu$ for some fixed constant $\mu>0$.

  \item \textbf{Uniform second moment, conditional negative tail second moment, and concentration bounds:}
        there exist constants $C,K,M<\infty$ such that for every $n$:
        \begin{itemize}
          \item $\E_{F_n}[X^2] \le C$,
          \item $\E_{F_n}\!\bigl[X^2\bigm|X\le 0\bigr]\le K$, and
        \item $F_n(x+t)-F_n(x)\le Mt$ for all $x\ge 0$ and $t>0$.
        \end{itemize}
\end{enumerate}
Define
\[
  G_n(y) \;=\; \int_{-\infty}^{y}(x^2-yx)\,dF_n(x),
\]
and let $y_n^*$ be the (unique) zero of $G_n$, i.e.\ $G_n(y_n^*)=0$.  Then
\[
  F_n(y_n^*) \;=\; O\!\bigl(F_n(0)^{1/3}\bigr)\;\to\;0
  \qquad\text{as }n\to\infty.
\]
More explicitly, 
\[
  F_n(y_n^*)
  \;\le\; 
    F_n(0)+\Theta^{1/3}
  F_n(0)^{1/3}
\]
where 
$
\Theta = 27 \,({K}+ C \sqrt{K}/\mu) M^2.
$

\end{theorem}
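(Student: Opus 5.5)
The plan is to read the identity $G_n(y_n^*)=0$ as a balance between the contribution of the non-positive betas, which is small because $F_n(0)$ is small, and the contribution of the betas in $(0,y_n^*]$, which is large whenever that interval carries much mass, by the concentration bound. Write $q=F_n(0)$, $y=y_n^*$, and $m=F_n(y)-F_n(0)$, the mass in $(0,y]$. Since $F_n(y_n^*)=q+m$, it suffices to show
\[
m^3\le \Theta\, q .
\]
Existence and uniqueness of $y_n^*>0$ come from Case 1 of Theorem \ref{thm:kp}, since $F_n(0)>0$ and $\mu_n>0$. Strictly, Case 1 needs $P(\beta<0)>0$; if all the mass at or below $0$ sits at $0$, the same argument applies with $q=F_n(0)$. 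The balance equation is
\[
\int_{(0,y]} x(y-x)\,dF_n(x) \;=\; \int_{(-\infty,0]} \bigl(x^2+y|x|\bigr)\,dF_n(x) \;=:\; A_n(y),
\]
obtained by splitting $G_n(y)$ at $0$. On $(-\infty,0]$ the integrand $x^2-yx=x^2+y|x|$ is nonnegative, and on $(0,y]$ it equals $-x(y-x)\le 0$.

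\emph{Step 1: upper bound for $A_n$.} We have $\E_{F_n}[X^2;X\le 0]=q\,\E_{F_n}[X^2\mid X\le0]\le qK$. By Cauchy--Schwarz, $\E_{F_n}[|X|;X\le0]\le \sqrt{q}\sqrt{qK}=q\sqrt K$. Hence
\[
A_n(y)\le qK+yq\sqrt K .
\]

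\emph{Step 2: lower bound for the positive part via concentration.} Set $t=m/(4M)$. The hypothesis $F_n(x+t)-F_n(x)\le Mt$ for $x\ge0$ gives three facts:
\begin{itemize}
\item applied to the whole interval, $m\le My$, so $t\le y/4$;
\item the mass in $(0,t)$ and the mass in $(y-t,y]$ are each at most $m/4$, so at least $m/2$ of the mass lies in $[t,y-t]$;
\item on $[t,y-t]$ we have $x(y-x)\ge t(y-t)\ge ty/2$.
\end{itemize}
Therefore
\[
\int_{(0,y]}x(y-x)\,dF_n\ \ge\ \frac m2\cdot\frac{m}{4M}\cdot\frac y2=\frac{m^2y}{16M}.
\]
Combining with Step 1 gives $m^2 y\le 16M q(K+y\sqrt K)$. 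The potentially small factor $y$ on the left is then absorbed using $m\le My$:
\[
m^3\le My\,m^2\le 16M^2 q\,(K+y\sqrt K).
\]
This already has the cube-root form. A slightly sharper bookkeeping of the endpoint masses should bring $16$ to the constant $27$ appearing in $\Theta$, for instance by choosing $t=m/(3M)$ and removing one third of the mass at each end. I would tune this at the end.

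\emph{Step 3: a uniform upper bound on $y_n^*$.} This step needs some care and is where I expect the main obstacle; the target is $y_n^*\le C/\mu$ (up to a harmless constant factor). Write
\[
G_n(y)=\E[X^2;X\le y]-y\,\E[X;X\le y],
\]
with expectations under $F_n$. For large $y$ the first term is at most $C$. For the second term, $\E[X;X\le y]=\mu_n-\E[X;X>y]\ge \mu-C/y$, using $\E[X;X>y]\le \E[X^2]/y$. So $G_n(y)<0$ as soon as $y\mu-C<C$, i.e.\ $y>2C/\mu$. Since $G_n$ is concave on $[0,\infty)$ and positive at $0$ (it equals $A_n(0)>0$), its unique zero satisfies $y_n^*\le 2C/\mu$. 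Sharpening this to $C/\mu$, or simply carrying the factor $2$ into $\Theta$, is the delicate point. I would first try a more careful use of $\E[X^2]\le C$ on the event $\{X>y\}$ together with the concavity of $G_n$.

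With $y\le C/\mu$ (or $2C/\mu$), Step 2 gives $m^3\le \Theta q$ for a constant of the form $\mathrm{const}\cdot(K+C\sqrt K/\mu)M^2$. Hence
\[
F_n(y_n^*)=q+m\le F_n(0)+\Theta^{1/3}F_n(0)^{1/3}\to0 .
\]
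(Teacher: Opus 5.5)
Your argument follows the paper's proof closely. You use the same split of $G_n(y_n^*)=0$ at $0$ into $\int_{(0,y]}x(y-x)\,dF_n=A_n(y)$, and the same bound $A_n(y)\le q(K+y\sqrt K)$ (Cauchy--Schwarz in place of Jensen). You also trim the endpoint masses with the concentration bound, as the paper does. Your Step~2 is correct and slightly sharper than the paper's. Keeping the factor $y$ and absorbing it with $m\le My$ gives $m^3\le 16M^2q(K+y\sqrt K)$. The paper instead trims with $c=m/(3M)$ (in your notation), uses only $x(y-x)\ge c^2$, and gets $27$. So your remark about ``bringing $16$ to $27$'' is backwards: once $y_n^*\le C/\mu$ is known, $16\le 27$ already gives $m^3\le\Theta q$, and no tuning is needed.

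The genuine gap is Step~3, which you leave open. With only $y_n^*\le 2C/\mu$, your constant is $16(K+2C\sqrt K/\mu)M^2$. This exceeds the stated $\Theta=27(K+C\sqrt K/\mu)M^2$ whenever $C/\mu>\tfrac{11}{5}\sqrt K$. So the $O(F_n(0)^{1/3})$ rate is proved, but the explicit inequality of the theorem is not. The factor $2$ comes from spending $C$ twice: once on $\E_{F_n}[X^2;X\le y]$ and once on $\E_{F_n}[X;X>y]\le C/y$.

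The missing observation, which is the paper's lemma bounding $y_n^*$, is to recombine the truncated moments:
\[
G_n(y)=\E_{F_n}[X^2]-y\mu_n-\E_{F_n}\bigl[X(X-y);\,X>y\bigr].
\]
The last expectation is nonnegative for $y\ge 0$, because $X>y\ge 0$ on that event. Hence $G_n(y)\le C-y\mu$ for every $y\ge 0$, and $G_n(y_n^*)=0$ forces $y_n^*\le C/\mu$ directly. This needs neither concavity nor positivity of $G_n(0)$. In fact your claim $A_n(0)>0$ fails when all the mass on $(-\infty,0]$ sits at $0$.

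Alternatively, you can keep $y_n^*\le 2C/\mu$ and tighten Step~2 by using $y-t\ge 3y/4$. That gives the constant $32/3$, and $\tfrac{32}{3}(K+2D\sqrt K)\le 27(K+D\sqrt K)$ with $D=C/\mu$.
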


\begin{remark}
    The uniform moment bounds will be satisfied by any realistic family of empirical distributions, such as those with uniformly bounded support.
\end{remark}

\begin{remark}
The concentration function condition $F_n(x+t)-F_n(x)\le Mt$ for $x\ge 0$,
$t>0$ is a mild continuity hypothesis driving the cube root rate of convergence. Any
 absolutely continuous family on $(0,\infty)$ with uniformly bounded density
 satisfies this condition.

 It allows $F_n$ to have atoms and singular-continuous parts on
$(-\infty,0]$, and even a singular-continuous component on $(0,\infty)$
provided the CDF does not increase faster than slope $M$ there.  In
particular, it covers distributions $F_n$ with an atom at zero (representing a population in which a
non-negligible fraction of assets have exactly zero market beta), as long as the mass at zero decays to zero with $n$.  In case the mass at zero converges to a positive value $m >0$ but $F_n(0^-) \to 0$, similar arguments and Theorem \ref{thm:kp} give the  bound
\[
\lim k_p/p \leq m + \Theta^{1/3} m^{1/3}.
\]
\end{remark}

\begin{remark}
    The concentration function condition may be further weakened  to the assumption that the family $\{F_n\}$ is equicontinuous on $(0,\infty)$. The conclusion remains that $F_n(y^*_n) \to 0$ as $n \to \infty$, but the rate of convergence depends on the modulus of continuity of the family $\{F_n\}$.  The proof of this statement is similar to the proof of Theorem \ref{thm:convergence}, but we omit it here.
\end{remark}

\begin{remark}
 The implication of Theorems \ref{thm:kp} and \ref{thm:convergence} is that when the number $p$ of assets is large, the active ratio is small if non-positive betas are rare and small in magnitude, and shrinks further as they become more rare, becoming asymptotically zero when all betas are positive.
\end{remark}

The bound of Theorem \ref{thm:convergence} is not sharp, but serves to establish the continuity claim.  As an example, if the betas across $p \gg 1$ assets are normally distributed with mean 1 and standard deviation 0.4, calculation
gives $P(\beta <0) \approx 0.62 \%$, or about one asset out of every 160 with a negative beta. We can compute $\beta^* \approx 0.288$ and $F(\beta^*) \approx 3.8\%$ active assets. By comparison, computation of the asymptotic upper bound above gives a value of about 15\% active assets in the LOMV portfolio. If the standard deviation is reduced to 0.25, $P(\beta <0)$ is lowered to about $0.0032\%$, $\beta^* \approx 0.12$, and $F(\beta^*) \approx 0.021\%$. The
active ratio theoretical upper bound is about 1\%. As expected, knowledge of the specific distribution allows for a much sharper estimate of the asymptotic active ratio than our general bound.

\subsection{Example (non-convergence in Case~1 of Theorem \ref{thm:kp}).}
\label{subsec:example}
We construct a simple discrete distribution $F$ for which $\beta^*$ is
an atom, illustrating the non-convergence phenomenon.

Let $F$ be the four-point distribution
\begin{equation}\label{eq:F-example}
    P(\beta=-1) = 0.05,\quad
    P(\beta=1) = 0.15,\quad
    P(\beta=2) = 0.30,\quad
    P(\beta=5) = 0.50.
\end{equation}
Then $\E[\beta] = -0.05+0.15+0.60+2.50 = 3.20>0$ and
$P(\beta<0)=0.05>0$, so we are in Case~1.

Computing $G$ on each interval between successive atoms:
\begin{itemize}
    \item $y\in[0,1)$: only the atom at $-1$ contributes.
          $G(y)=0.05(1+y)>0$.
    \item $y\in[1,2)$: atoms at $-1$ and $1$ contribute.
          $G(y) = 0.05(1+y)+0.15(1-y)= 0.20-0.10\,y$,
          which is positive for $y<2$ and equals zero at $y=2$.
    \item $y\in[2,5)$: atoms at $-1$, $1$, and $2$ contribute.
          $G(y) = 0.05(1+y)+0.15(1-y)+0.30\cdot 2(2-y)
               = 1.40-0.70\,y$.
          At $y=2$: $G(2)=0$.  For $y>2$: $G(y)<0$.
    \item $y\ge 5$: all four atoms contribute.
          $G(y) = 1.40-0.70y+0.50\cdot 5(5-y) = 13.90-3.20\,y<0$
          for $y\ge 5$.
\end{itemize}
Hence $G$ has a unique zero at $\beta^*=2$, which is an atom of $F$
with mass $m=0.30$.  The theorem gives
\[
    \liminf_{p\to\infty}\frac{k_p}{p}
    = F(2^-) = 0.05+0.15 = 0.20,
    \qquad
    \limsup_{p\to\infty}\frac{k_p}{p}
    = F(2) = 0.05+0.15+0.30 = 0.50,
\]
so the limit does not exist.  In each realization, the block of
$\approx 0.30\,p$ assets with $\beta_i=2$ is either entirely active
or entirely inactive, depending on the sign of $G_p(2)$, which is
determined by the random number of the remaining draws landing at $-1$, $1$, and $5$.

\section{Numerical Examples} \label{sec:numerics}

In this section we describe the results of numerical simulation studies to illustrate some of the results described above. 
We consider the simple covariance model 
$\Sigma = \sigma^2 \beta \beta^\top + \delta^2 I$ where $\sigma^2 =1$, $\delta^2$ is variable, and the entries of $\beta$ are drawn iid from a given distribution. Simulations and figures are generated in Python using code available on GitHub at
doi:10.5281/zenodo.19460320 .

First, for $p = 500, 3000,$ and $10000$, we compute the active ratio $k_p/p$ using Theorem \ref{thm:1} and for a beta distribution that is Normal with mean 1, variance $s^2$. We vary the standard deviation $s = 0.4, 0.25, 0.1$ in order to vary the proportion $P(\beta < 0)$
of negative betas. We repeat this experiment 400 times with $\delta^2 = 0.5$ and produce the boxplots shown in Figure \ref{fig:active_ratio_0.5},
illustrating the distribution of active ratios in comparison to the theoretical limit $F(\beta^*)$ as described in Case 1 of Theorem \ref{thm:kp}.

The simulation confirms
that as the number of assets $p$ increases, the active ratio $k_p/p$ approaches its theoretical limit $F(\beta ^*)$.
In addition, the simulation demonstrates how the limit $F(\beta ^*)$ depends on the proportion of non‑positive betas. As the fraction of negative betas decreases, the value of $F(\beta ^*)$ also declines, consistent with Theorem \ref{thm:convergence}.

These figures also display an upward bias phenomenon for moderate values of $p$, not described by the Theorems, in which the sample values of $k_p/p$ for a fixed $p$ tend to lie above the limit $F(\beta^*)$. To explain this, as in the proof of Theorem \ref{thm:kp}, for each $p$ we define the function
\[
    G_p(y) = \frac{\nu^2}{p\sigma^2}
             + \frac{\nu^2}{p}\sum_{j=1}^p
               \frac{\beta_j}{\delta_j^2}(\beta_j-y)\,\mathbf{1}_{\{\beta_j\le y\}}
\]
where $1/\nu^2 = \E[1/\delta^2]$ (and in particular $\nu = \delta$ when $\delta$ is deterministic).

The function $G_p$ is continuous and concave on $(0,\infty)$, and converges uniformly to $G$ as $p \to \infty$.  If $\beta^*(p)$ is the unique zero of $G_p$, we establish in the proof that $k_p/p = F_p(\beta^*(p)^-)$, $\beta^*(p) \to \beta^*$, and $F_p \to F$ uniformly on $(0,\infty)$ as $p \to \infty$.

If $p$ betas are sampled from the distribution $F$ many times, the mean across samples of the summation terms in $G_p$ will converge to $G$, so that, for fixed $p$,
\[
\E[G_p(\beta^*)] = \frac{\nu^2}{p\sigma^2} + G(\beta^*) = \frac{\nu^2}{p\sigma^2} >0.
\]
Since $G_p(y)$ tends to $-\infty$ as $y \to \infty$, and crosses zero only once, $G_p(\beta^*) > 0$ implies that its zero $\beta^*(p)$ is greater than $\beta^*$, leading to the upward bias of $k_p/p$ for finite $p$, compared to the limit.  As $p \to \infty$, the bias term
$\nu^2/(p\sigma^2)$ vanishes, and the sample $k_p/p$ converges as shown.

Figure \ref{fig:active_ratio_0.1}, in comparison with Figure \ref{fig:active_ratio_0.5}, reflects the dependence of the upward bias on the bias term $\nu^2/(p \sigma^2)$, where here $\nu = \delta$.  For the lower value $\delta^2 = 0.1$, the upward bias in corresponding panels is lower in the second figure, as expected. Table \ref{table:active-ratio} summarizes the simulated active ratio experiments displayed in Figures \ref{fig:active_ratio_0.5} and \ref{fig:active_ratio_0.1}.

Figure \ref{fig:nonconvergence} shows the results of 
a simulation experiment illustrating the example in Section \ref{subsec:example} of non-convergence of the active ratio in Case 1 of Theorem \ref{thm:kp}. For each $p \in \{500, 3000, 10000\}$, we simulate 400 trials with betas drawn from the given distribution, chosen so that $F$ has an atom at $\beta^*$.  About half the time, $k_p/p$ is near $F(\beta^{*-}) = 0.2$, and the other half near $0.5$.

\begin{figure}[H]
    \centering
    \includegraphics[width=1\linewidth]{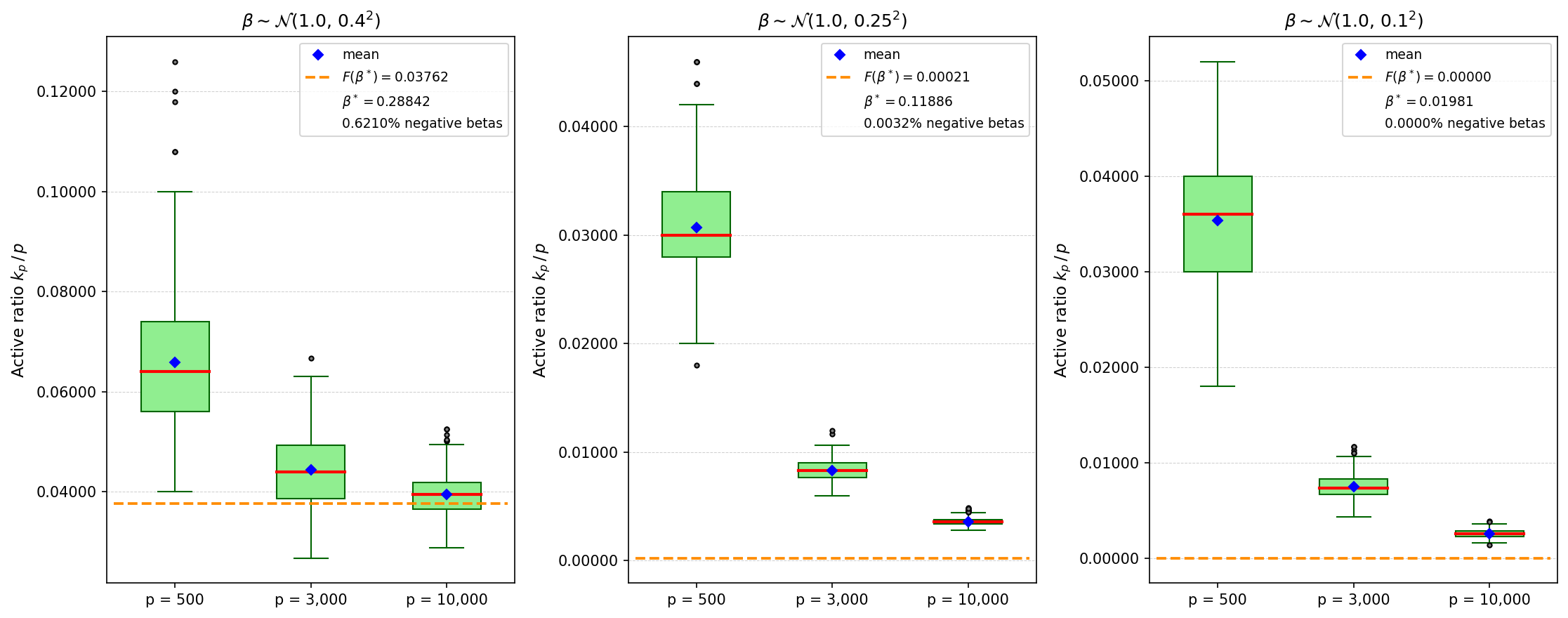}
    \caption{Distribution of the active ratio $k_p/p$ for the LOMV portfolio, with  $\beta_i \sim \mathcal{N}(1.0,\, s^2)$ for the values $s = 0.4, 0.25, 0.1$. For each trial, the simulated covariance matrix is $\Sigma = \sigma^2 \boldsymbol{\beta}\boldsymbol{\beta}^\top + \delta^2 I$ where $\delta^2 = 0.5$ and $\sigma^2 = 1$. For each $p$, 400 simulation trials were performed and the resulting active ratio statistics across trials summarized in boxplots.
    The dashed line is the theoretical limit $F(\beta^*)$ as $p \to \infty$. Lowering the variance $s^2$ of the beta distribution in turn lowers the probability $P(\beta <0)$ of negative betas, with corresponding values 0.62\%, 0.0032\%, and 0.0000\% (to four decimals) from left to right.
}
    \label{fig:active_ratio_0.5}
\end{figure}

\begin{figure}[H]
    \centering
    \includegraphics[width=1\linewidth]{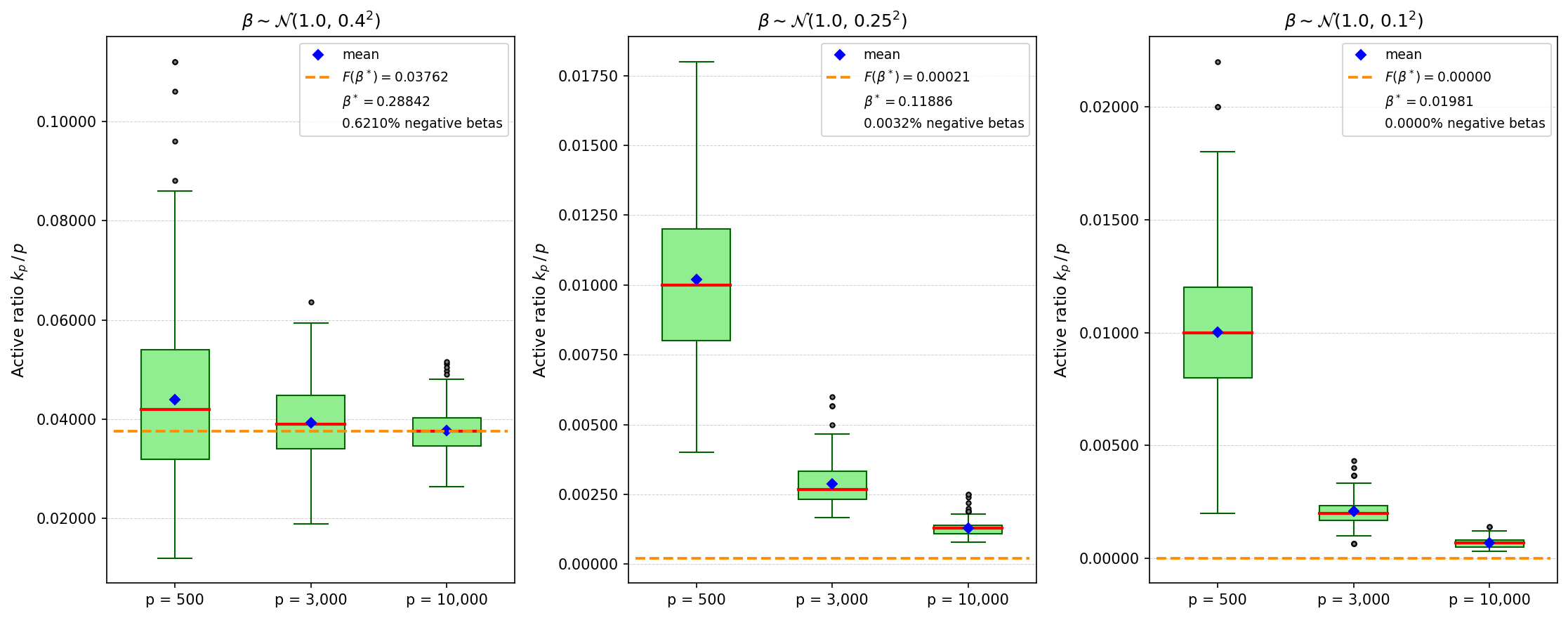}
    \caption{Distribution of active ratio $k_p/p$ with  $\beta \sim \mathcal{N}(1.0,\, s^2)$ for $s = 0.4, 0.25, 0.1$, this time with
    $\delta^2 = 0.1$ and otherwise the same as in Figure \ref{fig:active_ratio_0.5}. The upward bias for small $p$ is reduced compared to Figure \ref{fig:active_ratio_0.5} due to a smaller $\nu = \delta$ controlling the bias magnitude.  See text.
}
    \label{fig:active_ratio_0.1}
\end{figure}

\begin{table}[ht]
\centering
\caption{Mean $\pm$ standard deviation of the LOMV active ratio $k_p/p$ over 400 simulation trials, as shown in Figures \ref{fig:active_ratio_0.5} and \ref{fig:active_ratio_0.1}. $\beta_i \sim \mathcal{N}(\mu=1.0,\,s^2)$, and $s$ controls the proportion $P(\beta < 0)$. $F(\beta^*)$ is the theoretical limit as $p\to\infty$ (depends on $s$ but not on the idiosyncratic return variance $\delta^2$).
The active ratio converges toward the theoretical limit $F(\beta^*)$ as $p \to \infty$, with an upward bias that decreases with $\delta^2$.
}
\label{tab:active_ratio}
\begin{tabular}{llc|ccc}
\toprule
$s$ & $\delta^2$ & $F(\beta^*)$ & $p=500$ & $p=3{,}000$ & $p=10{,}000$ \\
\midrule
0.4 & 0.5 & 0.03762 & $0.0659 \pm 0.0135$ & $0.0444 \pm 0.0072$ & $0.0394 \pm 0.0043$ \\
 & 0.1 &  & $0.0439 \pm 0.0160$ & $0.0394 \pm 0.0078$ & $0.0377 \pm 0.0044$ \\
\midrule
0.25 & 0.5 & 0.00021 & $0.0307 \pm 0.0045$ & $0.0083 \pm 0.0009$ & $0.0036 \pm 0.0004$ \\
 & 0.1 &  & $0.0102 \pm 0.0024$ & $0.0029 \pm 0.0006$ & $0.0013 \pm 0.0002$ \\
\midrule
0.1 & 0.5 & 0.00000 & $0.0354 \pm 0.0061$ & $0.0075 \pm 0.0012$ & $0.0026 \pm 0.0004$ \\
 & 0.1 &  & $0.0100 \pm 0.0032$ & $0.0021 \pm 0.0006$ & $0.0007 \pm 0.0002$ \\
\bottomrule
\end{tabular}
\label{table:active-ratio}
\end{table}

\begin{figure}[htbp]
  \centering
  \includegraphics[width=\textwidth]{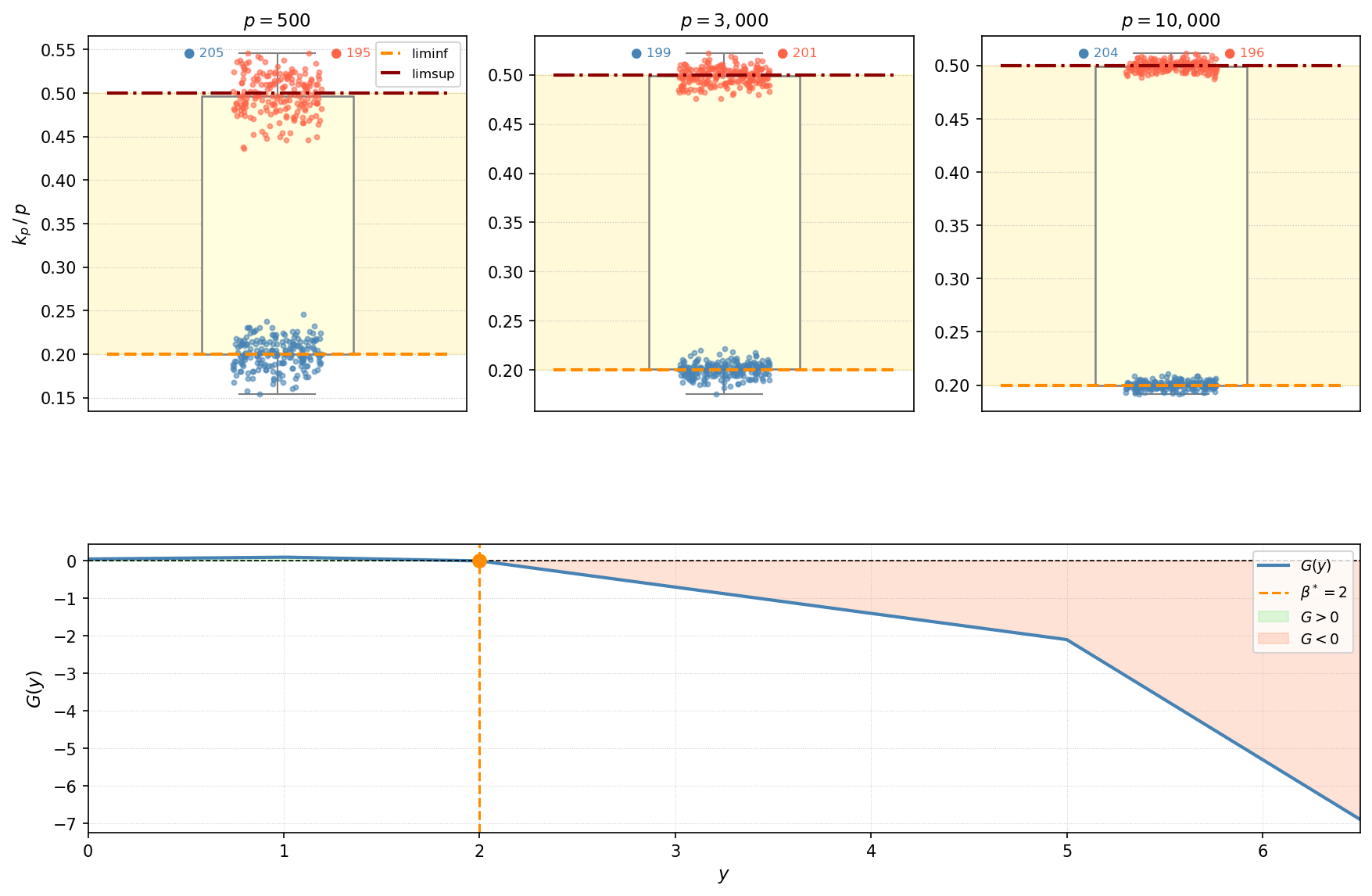}
  \caption{%
    \textbf{Non-convergence example of the LOMV active ratio $k_p/p$ (Theorem \ref{thm:kp}, Case 1).}
    The beta distribution is discrete with four atoms:
    $P(\beta=-1)=0.05$, $P(\beta=1)=0.15$, $P(\beta=2)=0.30$,
    $P(\beta=5)=0.50$.
     \emph{Bottom panel:} The population $G$ function,
    $G(y)=\sum_i p_i\,\beta_i(\beta_i-y)\,\mathbf{1}_{\beta_i\le y}$
    (shaded green where $G>0$ and pink where $G<0$),
    has its unique zero at $\beta^*=2$, which is an atom of $F$ with
    mass $0.30$.
    Because $\beta^*$ is an atom, the limit of $k_p/p$ does not exist;
    instead
    $\liminf_{p\to\infty}k_p/p = F(\beta^{*-}) = 0.20$
    and
    $\limsup_{p\to\infty}k_p/p = F(\beta^*) = 0.50$
    (orange dashed and dark-red dash-dot horizontal lines; gold band marks
    the non-convergence interval $[0.20,\,0.50]$).\\
    \emph{Top panels:} results of $400$ independent Monte Carlo trials for
    each value of $p\in\{500,3000,10\,000\}$, with $\sigma^2 = 1, \delta^2 = 0.1$.
    Blue dots are low-mode trials in which the entire block of
    ${\approx}\,0.30\,p$ assets with $\beta_i=2$ is inactive, so
    $k_p/p\approx 0.20$.
    Red dots are high-mode trials in which that block is wholly active, so
    $k_p/p\approx 0.50$.
  }
  \label{fig:nonconvergence}
\end{figure}

\section{Proofs of Proposition \ref{thm:wL} and Theorem \ref{thm:1}} \label{sec:proof12}

We begin with the proof of the Proposition. The solution $w=w^L$ of our constrained optimization problem \eqref{eq: prob 1} satisfies the well-known Karush-Kuhn-Tucker (KKT) conditions\footnote{See, for example, \citep{boyd-vandenberghe2004} or \citep{beck2023}.}, which are a set of equations in $w$, an auxiliary $p$-vector ${\lambda}$, and an auxiliary scalar $\nu$ (the Lagrange multipliers) as follows:   
    \begin{align}
           &2\Sigma w- {\lambda} +\nu \bfo_p=\bfz_p \label{eq:KKT}\\ 
           & w^\top \bfo_p  =1 \label{eq:KKTb}\\
           &\lambda_i w_i =0 \text{ }i=1,2,...,p \label{eq:KKTc}\\ 
           &\lambda_i \geq 0, w_i \geq 0 \text{ } i=1,2,...,p, \label{eq:KKTd}
    \end{align}
where $\bfz_p$ denotes the $p$-dimensional zero vector.
These KKT conditions include $2p+1$  equations \eqref{eq:KKT} -- \eqref{eq:KKTc} in the $2p+1$ unknowns 
\[
w_1,\dots,w_p,\lambda_1,\dots,\lambda_p, \nu,
\]along
with $2p$ inequality constraints \eqref{eq:KKTd}.

Define the (necessarily non-empty) set
\begin{equation}
    K = \{ i \leq p: w_i > 0 \}.
\end{equation}

Let $k >0$ denote the cardinality of $K$. For any vector $x \in \mathbb{R}^p$, denote by $x^K \in \mathbb{R}^k$ the $k$-dimensional vector obtained from $x$ by deleting the entries $x_j$ for all $j \notin K$.
Likewise, for any $p \times p$ matrix $M$, denote by $M^K$ the $k \times k$ principal submatrix obtained from $M$ by deleting all the rows and columns with indices outside $K$.

Since $\Sigma$ is symmetric positive definite, so is the submatrix $\Sigma^K$, and hence $\Sigma^K$ is invertible. 
Further,
\begin{equation}
    w^\top \Sigma w = (w^K)^\top \Sigma^K w^K.
\end{equation}

Since $\lambda_i = 0$ for all $i \in K$, taking the $k$ rows of equation \eqref{eq:KKT} corresponding to indices in $K$ tells us
\begin{equation} \label{eq:ellKKT}
    2\Sigma^K w^K + \nu \bfo_k = \bfz_k,
\end{equation}
where here the vectors $\bfo_k$ and $\bfz_k$ are $k$-dimensional.  Multiplying \eqref{eq:ellKKT} on the left by $\bfo_k^\top (\Sigma^K)^{-1}$ and using $\bfo_k^\top w^K = 1$, we obtain
\begin{equation} \label{eq:nu}
    \nu = \frac{-2}{\bfo_k^\top (\Sigma^K)^{-1} \bfo_k}
\end{equation}
and therefore
\begin{equation} \label{eq:w*}
    w^K = \frac{(\Sigma^K)^{-1}\bfo_k}{\bfo_k^\top (\Sigma^K)^{-1}\bfo_k} ,
\end{equation}
or, equivalently,
\begin{equation}
    w^L  = \frac{(\Sigma^{K,0})^+\bfo_p}{\bfo_p^\top(\Sigma^{K,0})^+\bfo_p} ,
\end{equation}
completing the proof of Proposition \ref{thm:wL}.

Part 1 of Theorem \ref{thm:1} follows solely from the monotonicity of the sequence $\{\beta_i\}$.
For convenience, define
\begin{equation}
    C_i = \sum_{j=1}^{i}\frac{\beta_j}{\delta_j^2}.
\end{equation} 
It is easy to verify, for all $i= 1,\dots, p-1$, that
\begin{equation}
  R_{i+1} - R_i = (-\beta_{i+1} + \beta_i) C_i.    
\end{equation}
 Since $(-\beta_{i+1} + \beta_i)$ is always non-positive, this means
 \begin{equation}
     C_i \geq 0 \implies R_{i+1} \leq R_i \text{ and }
     C_i \leq 0 \implies R_{i+1} \geq R_i.
 \end{equation}

If $C_i \leq 0$ for all $i \in P$, then $\{R_i\}$ is a positive, non-decreasing sequence, so $s=p$ satisfies the conclusion of the theorem.

Otherwise, let $s = \min\{i : C_i > 0\}$, $1 \leq s \leq p$.
If $j < s$, then $C_j \leq 0$ by definition of $s$, so $R_{j+1}\geq R_j$. Since $\beta_s >0$, $C_i$ is increasing for $i \geq s$, so if $j \geq s$ then $C_j > 0$, and we have $R_{j+1} \leq R_j$.  This establishes the conclusion of part 1.
\medskip

We proceed to Part 2. 
Our goal now is to determine $K = \{i \leq p : w^L_i > 0 \}$ in explicit form in terms of the parameters $\sigma, \beta, \delta$ of the problem.

Let $k = |K|$ and define
\begin{equation}
    \ell = \max \{i \leq p: R_i >0 \}.
\end{equation}
We will complete the proof by establishing
\[ k = \ell  \text{ and } K = \{1,2,\dots, k\}.\]

Let $\beta^K \in \mathbb{R}^k$ and the $k \times k$ matrix $\Sigma^K$ be determined from $K$ as before.
For ease of notation write $w = w^L$.

By the Woodbury identity, 
\[
(\Sigma^K)^{-1}= \mathrm{diag}(\frac{1}{(\delta^K)^2})-\frac{(\frac{\beta^K}{(\delta^K)^2})(\frac{\beta^K}{(\delta^K)^{2}})^\top}{\frac{1}{\sigma^2}+ (\frac{\beta^K}{(\delta^K)^2})^\top\beta^K}
\]
where $\frac{1}{({\delta^K})^2} = [\frac{1}{\delta_j^2}: j \in K]^\top$ and
$\frac{\beta^K}{(\delta^K)^{2}}= [\frac{\beta_j}{\delta_j^2}: j\in K]^\top$.

This means
\begin{equation}
    (\Sigma^K)^{-1}\bfo_{k} = \frac{1}{(\delta^K)^2} - \frac{(\frac{\beta^K}{(\delta^K)^2})(\frac{\beta^K}{(\delta^K)^{2}})^\top \bfo_{k}}{\frac{1}{\sigma^2}+ (\frac{\beta^K}{(\delta^K)^2})^\top\beta^K}.
\end{equation}
Now if $i \in K$, then $w^K_i > 0$, so
\begin{equation} \label{eq:vi}
    ((\Sigma^K)^{-1}\bfo_{k})_i = \frac{1}{\delta_i^2} - \frac{\frac{\beta_i}{\delta_i^2}\sum_{j \in K}\frac{\beta_j}{\delta_j^2}}{\frac{1}{\sigma^2}+\sum_{j\in K} \frac{\beta^2_j}{\delta_j^2}} >0.
\end{equation}
Clearing the positive denominators, we
obtain
\begin{equation} \label{eq:beta-condition}
 \frac{1}{\sigma^2}+\sum_{j\in K} \frac{\beta^2_j}{\delta_j^2}- \beta_i\sum_{j\in K}\frac{\beta_j}{\delta_j^2} > 0   
\end{equation}
or
\begin{equation} \label{eq:BK>CK}
    B_K > \beta_i C_K
\end{equation}
where 
\begin{equation}
 B_K = \frac{1}{\sigma^2}+\sum_{j\in K} \frac{\beta^2_j}{\delta_j^2}, \quad C_K = \sum_{j\in K}\frac{\beta_j}{\delta_j^2}.  
\end{equation}

Now suppose instead $i \notin K$, so $w_i=0$. 
Since $\Sigma w = \sigma^2 \beta \beta^\top w + \mathrm{diag}(\delta^2)w$ and $w_j=0$ for all $j \notin K$,
we have
\begin{equation} \label{eq:sigmawi}
    (\Sigma w)_i = \sigma^2 \beta_i \sum_{j \in K} \beta_j w_j.
\end{equation}
Conditions \eqref{eq:KKT} and \eqref{eq:KKTd} tell us
\begin{equation}
    0 \leq \lambda_i = 2(\Sigma w)_i + \nu,
\end{equation}
or, using \eqref{eq:sigmawi},
\begin{equation}
    0 \leq 2 \sigma^2 \beta_i \sum_{j \in K} \beta_j w_j + \nu.
\end{equation}
For $j \in K$, we have, from \eqref{eq:w*},
\begin{equation}
    w_j = \frac{((\Sigma^K)^{-1}\bfo_{k})_j}{\bfo_{k}^\top(\Sigma^K)^{-1}\bfo_{k}}.
\end{equation}
Using this, substituting for $\nu$ with \eqref{eq:nu}, and multiplying through by $\bfo_{k}(\Sigma^K)^{-1}\bfo_{k}/2$ gives us
\begin{equation}
    0 \leq \sigma^2 \beta_i \sum_{j \in K} \beta_j ((\Sigma^K)^{-1}\bfo_{k})_j - 1.
\end{equation}
By \eqref{eq:vi}, 
\begin{equation}
    ((\Sigma^K)^{-1}\bfo_{k})_j = \frac{1}{\delta_j^2} - \frac{\beta_j}{\delta_j^2} \frac{C_K}{B_K}.
\end{equation}
Using this in the previous inequality, we have
\begin{align}
   0 &\leq \sigma^2 \beta_i \sum_{j \in K} \frac{\beta_j}{\delta_j^2} (1 - \frac{\beta_j C_K}{B_K}) -1 \\
   &= \sigma^2 \beta_i \big( C_K - \frac{C_K}{B_K} (B_K - \frac{1}{\sigma^2}) \big) -1 \\
   &= \sigma^2 \beta_i \frac{C_K}{\sigma^2 B_K} -1 = \beta_i \frac{C_K}{B_K} -1,
\end{align}
or
\begin{equation} \label{eq:BK<CK}
    B_K \leq \beta_i C_K.
\end{equation}

Combining \eqref{eq:BK>CK} and \eqref{eq:BK<CK}, we have established the next Lemma:
\begin{lemma} \label{lem:dichotomy}
With $B_K$ and $C_K$ as defined above,
$i \in K$ if and only if $B_K > \beta_i C_K$.
\end{lemma}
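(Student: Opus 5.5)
The lemma is essentially a repackaging of the two inequalities just derived, \eqref{eq:BK>CK} and \eqref{eq:BK<CK}. The plan is to check that these two one-directional implications together give the equivalence, and that no case has been overlooked.

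\emph{Forward direction.} Suppose $i \in K$. Then $w_i > 0$, and since $w_i$ is a positive multiple of $((\Sigma^K)^{-1}\bfo_k)_i$, the inequality \eqref{eq:vi} holds. The normalizing denominator $\bfo_k^\top(\Sigma^K)^{-1}\bfo_k$ is positive because $\Sigma^K$ is positive definite. Multiplying \eqref{eq:vi} by $\delta_i^2 B_K > 0$ gives $B_K - \beta_i C_K > 0$. Here $B_K \ge 1/\sigma^2 > 0$, so the sign is preserved.

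\emph{Reverse direction, by contraposition.} Suppose $i \notin K$, so $w_i = 0$.
\begin{itemize}
\item The KKT conditions \eqref{eq:KKT} and \eqref{eq:KKTd} give $\lambda_i = 2(\Sigma w)_i + \nu \ge 0$.
\item Since $w$ vanishes off $K$, the diagonal part contributes nothing to $(\Sigma w)_i$, which leaves \eqref{eq:sigmawi}.
\item Substitute $\nu$ from \eqref{eq:nu} and $w_j$ from \eqref{eq:w*}, then multiply by the positive quantity $\bfo_k^\top(\Sigma^K)^{-1}\bfo_k/2$.
\item Insert the Woodbury expression for $((\Sigma^K)^{-1}\bfo_k)_j$ and use $\sum_{j\in K}\beta_j^2/\delta_j^2 = B_K - 1/\sigma^2$. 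This collapses the inequality to $\beta_i C_K/B_K - 1 \ge 0$.
\item Multiplying by $B_K > 0$ gives $B_K \le \beta_i C_K$, i.e.\ the strict inequality $B_K > \beta_i C_K$ fails.
\end{itemize}
Hence $B_K > \beta_i C_K$ forces $i \in K$. The two directions together give the equivalence for every $i \in P$.

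\emph{Expected difficulty.} The only delicate point is bookkeeping of signs: every multiplier used must be strictly positive. These are $B_K$, $\delta_i^2$, $\sigma^2$, and $\bfo_k^\top(\Sigma^K)^{-1}\bfo_k$, and each is positive as noted above. It is also worth noting that the algebraic simplification in the reverse direction does not depend on the sign of $\beta_i$ or of $C_K$. So no case split on signs is needed, and this is what lets the lemma hold for mixed-sign betas.
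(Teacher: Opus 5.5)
Your proposal is correct and follows the paper's argument. The forward direction comes from clearing the positive denominators in $((\Sigma^K)^{-1}\bfo_k)_i > 0$, and the converse comes from the KKT inequality $\lambda_i = 2(\Sigma w)_i + \nu \ge 0$ for $i \notin K$, which the Woodbury formula reduces to $B_K \le \beta_i C_K$; your positivity checks on $B_K$, $\delta_i^2$ and $\bfo_k^\top(\Sigma^K)^{-1}\bfo_k$ spell out what the paper leaves implicit.
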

With this in hand, we can establish

\begin{lemma}
    $K = \{1,2,\dots,k\}$ and $C_K \geq 0$.
\end{lemma}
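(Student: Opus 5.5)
The plan is to read everything off Lemma \ref{lem:dichotomy}, splitting into cases according to the sign of $C_K$. The one fact used throughout is that $B_K = \frac{1}{\sigma^2} + \sum_{j\in K}\beta_j^2/\delta_j^2 > 0$. By Lemma \ref{lem:dichotomy}, membership $i\in K$ depends on $i$ only through the value $\beta_i$, via the test $B_K > \beta_i C_K$. The point is that this test defines either a lower set or an upper set of betas. Since the betas are sorted increasingly, a lower set of the form $\{i : \beta_i < \tau\}$ or $\{i:\beta_i\le\tau\}$ is exactly an initial segment $\{1,\dots,m\}$. Ties cause no trouble, because assets with equal betas are treated identically by the test.

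\emph{Case $C_K = 0$.} The test reads $B_K > 0$, which always holds. Hence $K = P = \{1,\dots,p\}$, which is an initial segment, and $C_K \ge 0$ trivially.

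\emph{Case $C_K > 0$.} Dividing by $C_K$, we get $i\in K$ iff $\beta_i < \tau := B_K/C_K$. This is a strict lower set of the sorted betas. It is nonempty, since $K \neq \emptyset$ because the weights sum to one. Therefore $K=\{1,\dots,k\}$ with $k=|K|$.

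\emph{Case $C_K<0$.} This is the only step needing an idea, and it is where the sign convention \eqref{eq:positive} must enter. Dividing by the negative number $C_K$ reverses the inequality, so $i\in K$ iff $\beta_i > \tau := B_K/C_K$, and here $\tau<0$. Consequently every $i\notin K$ has $\beta_i \le \tau < 0$, so $\sum_{i\notin K}\beta_i/\delta_i^2 \le 0$; this also holds if the complement is empty. Adding this to the strictly negative sum over $K$ gives
\[
\sum_{i=1}^p \frac{\beta_i}{\delta_i^2} = C_K + \sum_{i\notin K}\frac{\beta_i}{\delta_i^2} < 0,
\]
which contradicts \eqref{eq:positive}. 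So this case cannot occur, and $C_K\ge 0$.

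Combining the three cases gives both conclusions: $C_K \ge 0$, and $K=\{1,\dots,k\}$. I expect the main obstacle to be recognizing that the negative case is excluded only by the global normalization \eqref{eq:positive}, through the observation that the excluded assets then all have negative betas. This is precisely the point the remark after Corollary \ref{cor:betathreshold} says is easily overlooked in the mixed-sign setting. With the lemma in hand, the identification $k=\ell$ should follow afterwards by comparing the test $B_K > \beta_i C_K$ for $K=\{1,\dots,k\}$ with the sign of $R_i$.
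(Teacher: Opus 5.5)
Your proposal is correct and takes essentially the same approach as the paper: both split on the sign of $C_K$ using Lemma~\ref{lem:dichotomy}. The case $C_K<0$ is excluded because every asset outside $K$ would then have negative beta, giving $\sum_i \beta_i/\delta_i^2<0$ against \eqref{eq:positive}; the case $C_K=0$ gives $K=P$, and the case $C_K>0$ gives an initial segment of the sorted betas.
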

\begin{proof}  Suppose that $C_K <0$. Let $\tau = -B_K/C_K >0$. By Lemma \ref{lem:dichotomy}, for all $i \in K$, $j \notin K$, 
\begin{equation}
    \beta_i > -\tau \geq \beta_j.
\end{equation}
Since $C_K = \sum_{i \in K} \frac{\beta_i}{\delta_i^2} <0$, at least one of the terms in this sum must be negative, and hence $\beta_j < 0$ for all $j \notin K$, which implies
\begin{align}
    C_P = \sum_{j \notin K}\frac{\beta_j}{\delta_j^2} + C_K < 0,
\end{align}
which contradicts our assumption \eqref{eq:positive}. Hence $C_K \geq 0$.

{\em Case 1.} $C_K = 0$. If $j \notin K$, then by Lemma \ref{lem:dichotomy} $B_K \leq \beta_j C_K = 0$, which contradicts $B_K > 0$. Hence $K = P$, $k=p$, and $K = \{1,2,\dots,p\}$.

{\em Case 2.} $C_K > 0$. By Lemma \ref{lem:dichotomy}, if $i \in K$, $j \notin K$, then $\beta_i C_K < B_K \leq \beta_j C_K$, and so
\begin{equation}
    \beta_i < \beta_j \text{ for all } i \in K, j \notin K.
\end{equation}
Since $k = |K|$ and the betas are arranged in increasing order, we must have
\begin{equation}
    K = \{1,2,\dots, k\}.
\end{equation}
\end{proof} 

From the previous two lemmas, we immediately have
Corollary \ref{cor:betathreshold}.

It remains to show that ${k} = \ell$. 
It is easy to verify that
\begin{equation}
    R_k = B_K - \beta_k C_K \text{ and } R_{k+1} = B_K - \beta_{k+1} C_K.
\end{equation}
Applying Lemma \ref{lem:dichotomy}, this means $R_k >0$ and $R_{k+1} \leq 0$.  From the monotonicity properties of the sequence $\{R_i\}$ established in Part 1, this implies that
\begin{equation}
    {k} = \max\{i \leq p: R_i > 0 \} = \ell,
\end{equation}
completing the proof of Part 2.
\qed

\section{Proofs of Theorems \ref{thm:kp} and \ref{thm:convergence}} \label{sec:proof3}

\subsection{Theorem \ref{thm:kp}}

\begin{proof}
We write $1/\nu^2 = \E[1/\delta^2]$ and treat the three cases in turn.
We first establish preliminary results used in all cases.

\medskip
\noindent\textsl{Preliminary 1: Continuity of $G$.}
Fix $y_0 \ge 0$ and let $y \to y_0$. We show $G(y) \to G(y_0)$. Assume $y < y_0$; the case $y > y_0$ is similar.  Adding and subtracting
$\int_{-\infty}^{y}(x^2-y_0x)\,dF(x)$ gives
\begin{align*}
    G(y_0) - G(y)
    &= \int_{-\infty}^{y_0} x^2 - y_0x\,dF(x) - \int_{-\infty}^{y} x^2 - yx\,dF(x) \\
    &= \int_{-\infty}^{y_0} x^2 - y_0x\,dF(x) -\int_{-\infty}^{y}(x^2-y_0x)\,dF(x) \\
    &\qquad +\int_{-\infty}^{y}(x^2-y_0x)\,dF(x)  - \int_{-\infty}^{y} x^2 - yx\,dF(x) \\
    &= \underbrace{\int_{y}^{y_0}(x^2-yx)\,dF(x)}_{T_1} + \underbrace{\int_{-\infty}^{y}(y_0-y)\,x\,dF(x)}_{T_2},
\end{align*}
where the first integral is over $(y,y_0]$.

\textit{Term $T_1$.}  For $x\in(y,y_0]$,
$|x-y|\le y_0-y$ and $|x|\le \max\{|y|,|y_0|\}$, so
\[
|T_1|\le \max\{|y|,|y_0|\}(y_0-y)[F(y_0)-F(y)]\le \max\{|y|,|y_0|\}(y_0-y)\to 0
\]
(regardless of whether or not $F$ has an atom at $y_0$).

\textit{Term $T_2$.}
$|T_2|\le|y_0-y|\int_{-\infty}^{y_0}|x|\,dF(x)\to 0$,
since $\beta$ has finite mean.

Hence $G$ is continuous on $[0,\infty)$.

\medskip
\noindent\textsl{Preliminary 2: Pointwise convergence $G_p(y)\to G(y)$.}
Define the empirical approximation
\[
    G_p(y) = \frac{\nu^2}{p\sigma^2}
             + \frac{\nu^2}{p}\sum_{j=1}^p
               \frac{\beta_j}{\delta_j^2}(\beta_j-y)\,\mathbf{1}_{\{\beta_j\le y\}}.
\]
For each fixed $y\ge 0$, the summands
$X_j:=(\beta_j/\delta_j^2)(\beta_j-y)\,\mathbf{1}_{\{\beta_j\le y\}}$
are iid. Further, $\E[|X_j|]<\infty$: on $\{\beta\le 0\}$, $|X_j|$ is
bounded by $(1/\delta_j^2)(\beta_j^2\mathbf{1}_{\{\beta_j\le 0\}}+y|\beta_j|\mathbf{1}_{\{\beta_j\le 0\}})$,
which has finite expectation by the negative-tail second moment assumption, independence of $\delta$, and the assumption $\E[1/\delta^2]<\infty$; on $\{0<\beta\le y\}$, $|X_j|$ is bounded by
$y|\beta_j|/\delta_j^2$, which also has finite expectation.
By the SLLN,
\[
    G_p(y)\;\xrightarrow{a.s.}\;
    G(y)=\int_{-\infty}^y(x^2-yx)\,dF(x)
    \qquad\text{for each fixed }y\ge 0.
\]

\medskip
\noindent\textsl{Preliminary 3: Concavity of $G_p$ on $(0,\infty)$.}

\begin{lemma}\label{lem:Gp}
$G_p$ is continuous and concave on $(0,\infty)$.
\end{lemma}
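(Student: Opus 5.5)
The idea is to write $G_p$ as a constant plus a finite sum of functions of $y$, one per asset, and to check that each summand is continuous and concave on $(0,\infty)$. A finite sum of continuous concave functions is continuous and concave, and adding the constant $\nu^2/(p\sigma^2)$ changes neither property. So the whole proof reduces to the single-term function
\[
h_j(y) = \frac{\beta_j}{\delta_j^2}(\beta_j - y)\,\mathbf{1}_{\{\beta_j \le y\}}, \qquad y>0,
\]
with $\beta_j$ and $\delta_j$ fixed.

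\emph{Case $\beta_j \le 0$.} For every $y>0$ the indicator equals $1$. Hence $h_j(y) = \beta_j^2/\delta_j^2 - (\beta_j/\delta_j^2)\,y$, which is affine in $y$ and therefore continuous and concave.

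\emph{Case $\beta_j > 0$.} Here $h_j$ is piecewise:
\begin{itemize}
\item $h_j(y) = 0$ for $0<y<\beta_j$;
\item $h_j(y) = (\beta_j/\delta_j^2)(\beta_j - y)$ for $y \ge \beta_j$.
\end{itemize}
The two pieces agree at $y=\beta_j$, since the second piece vanishes there. This gives continuity, and shows the jump of the indicator is harmless. The same formula can be written as
\[
h_j(y) = -\frac{\beta_j}{\delta_j^2}\,(y-\beta_j)^+ = \min\Bigl\{0,\ \frac{\beta_j}{\delta_j^2}(\beta_j - y)\Bigr\}.
\]
This is a minimum of two affine functions, hence concave. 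Equivalently, it is a nonpositive multiple of the convex function $(y-\beta_j)^+$, using $\beta_j/\delta_j^2>0$.

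Summing over $j$ and multiplying by $\nu^2/p>0$ gives the lemma. The only delicate point is the continuity at the kink $y=\beta_j$, which the $\min$ representation handles. The sign convention $\beta_j>0$ is exactly what makes the slope change there go downward, from $0$ to $-\beta_j/\delta_j^2$, which is what concavity requires.

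The restriction to $(0,\infty)$ is what lets the nonpositive betas contribute purely affine terms. On all of $\mathbb{R}$, a negative $\beta_j$ would produce the term $-(\beta_j/\delta_j^2)(y-\beta_j)^+$, which is convex because $-\beta_j/\delta_j^2>0$. Concavity of $G_p$ would then fail in general.
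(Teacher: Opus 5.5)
Your proof is correct, but it is organized differently from the paper's. The paper treats $G_p$ as a whole. It computes the a.e.\ derivative $G_p'(y)=-(\nu^2/p)\sum_j(\beta_j/\delta_j^2)\mathbf{1}_{\{\beta_j\le y\}}$ and notes that, for $y>0$, the only terms entering this sum as $y$ increases come from positive $\beta_j$, so $G_p'$ is non-increasing; continuity is asserted by inspection. You instead prove concavity term by term. Each summand with $\beta_j\le 0$ is affine on $(0,\infty)$, and each summand with $\beta_j>0$ equals $\min\{0,(\beta_j/\delta_j^2)(\beta_j-y)\}$.

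The underlying fact is the same: on $(0,\infty)$ every kink comes from a positive $\beta_j$ and bends downward. Your version is more self-contained, because the min-of-affine representation gives continuity and concavity at once. The paper's step from ``derivative non-increasing a.e.'' to concavity tacitly uses that $G_p$ is continuous and piecewise linear; an a.e.\ monotone derivative alone does not imply concavity. In exchange, the paper's route yields the explicit slope formula. This is the continuous analogue of the increment identity $R_{i+1}-R_i=(\beta_i-\beta_{i+1})C_i$ from the proof of Theorem~\ref{thm:1}, and it makes that link to the $R_i$ sequence visible.

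Your closing remark is also correct and useful: on all of $\mathbb{R}$, a negative $\beta_j$ contributes a convex kink, so concavity can fail there. This explains why the lemma is stated only on $(0,\infty)$.
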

\begin{proof}
$G_p$ is continuous by inspection. Moreover, $G_p$ is differentiable a.e., and for $y \notin \{\beta_j\}_{j=1}^p$, differentiating gives
$G_p'(y)=-(\nu^2/p)\sum_{j=1}^p(\beta_j/\delta_j^2)\,\mathbf{1}_{\{\beta_j\le y\}}$.
For $y>0$, as $y$ increases the sum $\sum_{j:\beta_j\le y}\beta_j/\delta_j^2$
is non-decreasing (each new term entering as $y$ crosses a positive
$\beta_j$ is positive, and no terms leave).  Hence $-G_p'(y)$ is
non-decreasing for a.e.\ $y>0$, so $G_p$ is concave there.
\end{proof}

\medskip
\noindent\textsl{Preliminary 4: Active-set characterization.}

\begin{lemma}\label{lem:activeset}
Suppose $(1/p)\sum_i\beta_i/\delta_i^2>0$ and $G_p$ has a unique
zero $\beta^*(p)\in(0,\infty)$.  Then
\begin{equation}\label{eq:kp-via-betastar}
    k_p = \sum_{i=1}^p\mathbf{1}_{\{\beta_i<\beta^*(p)\}}.
\end{equation}
In particular, $k_p/p = F_p(\beta^*(p)^-)$, where $F_p$ is the
empirical cdf.
\end{lemma}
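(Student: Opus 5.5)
The plan is to connect the zero of $G_p$ to the sequence $\{R_i\}$ of Theorem~\ref{thm:1}. Since the betas are sorted increasingly, define for $y$ in $[\beta_i,\beta_{i+1})$ the quantity
\[
R(y) = \frac{1}{\sigma^2} + \sum_{j:\beta_j\le y}\frac{\beta_j}{\delta_j^2}(\beta_j - y),
\]
so that $G_p(y) = (\nu^2/p)R(y)$. In particular $G_p(\beta_i)$ has the same sign as
\[
\frac{1}{\sigma^2}+\sum_{j:\beta_j\le\beta_i}\frac{\beta_j}{\delta_j^2}(\beta_j-\beta_i).
\]
This equals $R_i$ whenever $\beta_i$ is not tied with a larger-index beta. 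Terms with $\beta_j=\beta_i$ contribute zero in any case, so ties are harmless and the second form of $R_i$ in Theorem~\ref{thm:1} applies. Hence $\operatorname{sign} G_p(\beta_i)=\operatorname{sign} R_i$ for every $i$. The hypothesis $(1/p)\sum_i\beta_i/\delta_i^2>0$ is exactly the sign convention \eqref{eq:positive}, so Theorem~\ref{thm:1} applies and gives $K=\{1,\dots,k_p\}$ with $k_p=\ell=\max\{i:R_i>0\}$.

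The key step is to show that $R_i>0$ if and only if $\beta_i<\beta^*(p)$. I would use the shape of $G_p$. We have $G_p(y)\to\nu^2/(p\sigma^2)$ plus contributions from nonpositive betas as $y\downarrow$, and for $y\le 0$ every term $\beta_j(\beta_j-y)\mathbf 1_{\beta_j\le y}$ is nonnegative, since $\beta_j\le y\le0$. Thus $G_p>0$ on $(-\infty,0]$. On $(0,\infty)$, $G_p$ is concave by Lemma~\ref{lem:Gp} and positive near $0$. Because $G_p$ has a unique zero $\beta^*(p)>0$, concavity forces $G_p>0$ on $(0,\beta^*(p))$ and $G_p<0$ on $(\beta^*(p),\infty)$. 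If $G_p$ were positive past the zero, concavity would be violated, since a concave function that is positive, then zero, then positive again would have to be identically zero in between, contradicting uniqueness; the same applies to staying at zero. Therefore $R_i>0$ iff $G_p(\beta_i)>0$ iff $\beta_i<\beta^*(p)$, and likewise $\beta_i=\beta^*(p)$ gives $R_i=0$, so that $i$ is not active.

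Combining the two steps, the indices with $R_i>0$ are precisely those with $\beta_i<\beta^*(p)$. By Theorem~\ref{thm:1} these are $\{1,\dots,\ell\}=K$, so $k_p=\#\{i:\beta_i<\beta^*(p)\}$, and dividing by $p$ gives $k_p/p=F_p(\beta^*(p)^-)$.

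The main obstacle I expect is the careful sign argument for $G_p$ on $(0,\infty)$: ruling out tangential zeros and handling ties among betas in identifying $G_p(\beta_i)$ with $R_i$. The fallback is the unimodality of $\{R_i\}$ from Part~1 of Theorem~\ref{thm:1}, which directly gives a single sign change along the $\beta_i$.
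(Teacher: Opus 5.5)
Your proposal is correct and follows the paper's proof. You identify $G_p(\beta_i)=(\nu^2/p)R_i$, noting that ties contribute zero. You then use concavity and the unique zero to get $G_p(\beta_i)>0 \iff \beta_i<\beta^*(p)$, and conclude via Theorem~\ref{thm:1}. Your explicit handling of ties and of $y\le 0$ (where Lemma~\ref{lem:Gp} gives no concavity) fills in details the paper leaves implicit. One small correction: the contradiction past the zero comes directly from $G_p(\beta^*(p))\ge \lambda G_p(a)+(1-\lambda)G_p(c)>0$, not from the function being ``identically zero in between.''
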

\begin{proof}
Since $\sum_i\beta_i/\delta_i^2>0$, Theorem~\ref{thm:1}
applies.  With betas in increasing order,
$k_p=\max\{i\le p:R_i>0\}$ where
$R_i=(1/\sigma^2)+\sum_{j=1}^{i-1}(\beta_j/\delta_j^2)(\beta_j-\beta_i)$.
A direct computation gives $(\nu^2/p)\,R_i=G_p(\beta_i)$, so $R_i>0$
if and only if $G_p(\beta_i)>0$.  Since $G_p$ is concave with unique
zero $\beta^*(p)$, we have $G_p(\beta_i)>0$ if and only if
$\beta_i<\beta^*(p)$, giving \eqref{eq:kp-via-betastar}.
\end{proof}

\medskip
\noindent\textsl{Preliminary 5: Uniform convergence of concave functions.}

\begin{lemma}\label{lem:concave}
If $f_n\colon\R^+\to\R$ is a sequence of continuous concave functions
converging pointwise to a continuous limit $f$, the convergence is
uniform on compact sets.
\end{lemma}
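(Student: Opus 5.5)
We will prove this directly from concavity by a finite-grid sandwich argument, without appealing to any earlier result of the paper. Fix a compact set, which we may enlarge to an interval $[a,b]\subset\R^+$ (allowing $a=0$ if $0$ is in the domain). Choose $b' > b$ in the domain. The limit $f$ is continuous, hence uniformly continuous on $[a,b']$. Let $\omega$ denote its modulus of continuity there.

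\emph{Setting up the grid.} Given $\varepsilon>0$, pick a mesh $h>0$ with $\omega(h)<\varepsilon$ and $b+2h\le b'$. Take the grid $x_i = a+ih$, $i=0,1,\dots,N$, extending at least two steps past $b$ but staying inside $[a,b']$. Since there are finitely many grid points, pointwise convergence gives $n_0$ such that $|f_n(x_i)-f(x_i)|<\varepsilon$ for all $i$ and all $n\ge n_0$.

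\emph{Lower bound.} Take $x\in[x_i,x_{i+1}]\cap[a,b]$. Concavity of $f_n$ puts $f_n(x)$ above the chord through $(x_i,f_n(x_i))$ and $(x_{i+1},f_n(x_{i+1}))$. That chord value is a convex combination of the two endpoint values. Each endpoint value is within $\varepsilon$ of $f(x_i)$ or $f(x_{i+1})$, and each of those is within $\omega(h)<\varepsilon$ of $f(x)$. Hence
\[
f_n(x)\ge f(x)-2\varepsilon .
\]

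\emph{Upper bound.} This is the step needing care, particularly on the first cell $[x_0,x_1]$, where there is no grid point to the left (for example at $a=0$). Concavity says that outside an interval, $f_n$ lies below the extension of its chord over that interval. We always use the chord of the next cell to the right, over $[x_{i+1},x_{i+2}]$, extended back to $x\in[x_i,x_{i+1}]$. This gives
\[
f_n(x)\le f_n(x_{i+1}) + (x-x_{i+1})\,\frac{f_n(x_{i+2})-f_n(x_{i+1})}{h}.
\]
Here $|x-x_{i+1}|\le h$, so the correction term is at most $|f_n(x_{i+2})-f_n(x_{i+1})|$. For $n\ge n_0$ this is at most $\omega(h)+2\varepsilon<3\varepsilon$. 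Also $f_n(x_{i+1})\le f(x_{i+1})+\varepsilon\le f(x)+2\varepsilon$. Therefore
\[
f_n(x)\le f(x)+5\varepsilon .
\]
Because we only ever extrapolate from the right, the argument needs no points to the left of $a$. This is why we require room $b+2h\le b'$ inside the domain.

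\emph{Conclusion and main obstacle.} Combining the two bounds gives $\sup_{[a,b]}|f_n-f|\le 5\varepsilon$ for all $n\ge n_0$, which is uniform convergence on $[a,b]$. The main obstacle is obtaining the upper bound uniformly, including at the left endpoint of the domain. The one-sided chord extrapolation handles this, and it uses only the continuity of $f$ together with the concavity of each $f_n$.
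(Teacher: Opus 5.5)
Your proof is correct, but it takes a different route from the paper, which gives no argument of its own and simply cites Rockafellar's Theorem~10.8.

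\textbf{What the cited theorem gives.} Rockafellar's result is stronger in several directions. It works in $\R^n$, it needs pointwise convergence only on a dense subset, and it delivers finiteness and convexity (concavity) of the limit automatically. However, it is stated for finite convex functions on a relatively open convex set, and it gives uniformity only on closed bounded subsets of that open set. So if $\R^+$ is read as $[0,\infty)$, which is where the paper proves $G$ continuous, the citation does not cover compact sets containing $0$.

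\textbf{What your argument gives.} Your one-dimensional grid sandwich is self-contained and elementary. The chord inequality gives the lower estimate. Extrapolating the chord of the next cell to the right gives the upper estimate, and this one-sided extrapolation is exactly what lets you include a closed left endpoint. The cost is that you genuinely use the hypothesis that $f$ is continuous, through its modulus on $[a,b']$, and you need room to the right of $b$, which the unbounded domain supplies. Your argument also never uses continuity of the $f_n$, only their concavity, so it proves slightly more than the lemma states.
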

\noindent
This is a standard result; see, e.g., \citet{rockafellar1970},
Theorem~10.8.

\medskip
\noindent\underline{\textsl{Case 1: $P(\beta<0)>0$ and $\E[\beta]>0$.}}

\medskip
\noindent\textit{Step~1: Properties of $G$.}\;
$G(0)=\E[\beta^2\mathbf{1}_{\{\beta\le 0\}}]>0$.  Dividing
$G(y)=\E[\beta^2\mathbf{1}_{\{\beta\le y\}}]-y\E[\beta\mathbf{1}_{\{\beta\le y\}}]$
by $y>0$ and letting $y\to\infty$: the second term increases to
$\E[\beta]$ by monotone convergence; the first term $\E[\beta^2\mathbf{1}_{\{\beta\le y\}}]/y$
tends to zero by dominated convergence (the integrand is bounded by
$\beta^+$ and vanishes pointwise).  Hence $G(y)/y\to-\E[\beta]<0$,
so $G(y)\to-\infty$.  Since $G$ is continuous and concave with
$G(0)>0$ and $G\to-\infty$, it has a unique zero $\beta^*>0$.
Since $\beta^*>0$ and $P(\beta<0)>0$, the entire negative-beta mass
lies below $\beta^*$, giving $F(\beta^{*-})\ge P(\beta<0)>0$.

\medskip
\noindent\textit{Step~2: Existence, uniqueness, and convergence of $\beta^*(p)$.}\; First,
\[
G_p(0)=\nu^2/(p\sigma^2)+(\nu^2/p)\sum_{j:\beta_j\le 0}\beta_j^2/\delta_j^2>0.
\]
For $y$ larger than all $\beta_j$,
\[
G_p(y)=\nu^2/(p\sigma^2)+(\nu^2/p)\sum_j\beta_j^2/\delta_j^2
-y\cdot(\nu^2/p)\sum_j\beta_j/\delta_j^2.
\]
By the SLLN and independence of $\delta$,
$(1/p)\sum_j\beta_j/\delta_j^2\to\E[\beta/\delta^2]=\E[1/\delta^2]\E[\beta]>0$
almost surely, so for all large $p$ this coefficient of $-y$ is
positive and $G_p(y)\to-\infty$.  Since $G_p$ is continuous and
concave (Lemma~\ref{lem:Gp}), it has a unique zero $\beta^*(p)>0$
for all large $p$.
By Preliminaries~2 and~5, $G_p\to G$ uniformly on compact sets
almost surely.  Since $G$ has a unique zero $\beta^*$ with
$G(\beta^*-\epsilon)>0$ and $G(\beta^*+\epsilon)<0$
for every $\epsilon>0$, this forces $\beta^*(p)\to\beta^*$
almost surely.

\medskip
\noindent\textit{Step~3: Upper bound $\limsup\, k_p/p \le F(\beta^*)$.}\;
For all large $p$, $(1/p)\sum_i\beta_i/\delta_i^2>0$ almost surely
and $\beta^*(p)$ exists, so Lemma~\ref{lem:activeset} gives
$k_p/p = F_p(\beta^*(p)^-)$.
Fix $\epsilon>0$.  Since $\beta^*(p)\to\beta^*$, for all large $p$ we
have $\beta^*(p)<\beta^*+\epsilon$, so
\[
    \frac{k_p}{p}
    = F_p(\beta^*(p)^-)
    \le F_p(\beta^*+\epsilon).
\]
By the Glivenko--Cantelli theorem at the fixed point
$\beta^*+\epsilon$,
$F_p(\beta^*+\epsilon)\to F(\beta^*+\epsilon)$ a.s.
Hence $\limsup\, k_p/p \le F(\beta^*+\epsilon)$ a.s.
Letting $\epsilon\downarrow 0$ through a sequence and using
right-continuity of $F$:
\begin{equation}\label{eq:upper}
    \limsup_{p\to\infty}\frac{k_p}{p}
    \;\le\; F(\beta^*) \quad\text{a.s.}
\end{equation}

\medskip
\noindent\textit{Step~4: Lower bound $\liminf\, k_p/p \ge F(\beta^{*-})$.}\;
Fix $\epsilon>0$.  For all large $p$,
$\beta^*(p)>\beta^*-\epsilon$, so every $\beta_i<\beta^*-\epsilon$
satisfies $\beta_i<\beta^*(p)$ and is active.  Hence
\[
    \frac{k_p}{p}
    = F_p(\beta^*(p)^-)
    \ge \frac{1}{p}\#\{i:\beta_i<\beta^*-\epsilon\}
    = F_p((\beta^*-\epsilon)^-).
\]
By the Glivenko--Cantelli theorem,
$\sup_t|F_p(t)-F(t)|\to 0$ a.s., which gives
$F_p((\beta^*-\epsilon)^-)\ge F(\beta^*-\epsilon)
 -\sup_t|F_p(t)-F(t)| \to F(\beta^*-\epsilon)$ a.s.
Hence $\liminf\, k_p/p \ge F(\beta^*-\epsilon)$ a.s.
Letting $\epsilon\downarrow 0$:
\begin{equation}\label{eq:lower}
    \liminf_{p\to\infty}\frac{k_p}{p}
    \;\ge\; F(\beta^{*-}) \quad\text{a.s.}
\end{equation}

\medskip
\noindent\textit{Step~5: Both bounds are achieved.}\;
Set $m = P(\beta=\beta^*)$.  If $m=0$, then
$F(\beta^{*-})=F(\beta^*)$ and Steps~3--4 force
$\liminf=\limsup=F(\beta^*)$; both equalities in~\eqref{eq:exact}
hold trivially.

Now suppose $m>0$.  Since $P(\beta<0)>0$ and $\beta^*>0$, the
distribution $F$ places positive mass strictly below $\beta^*$, so
$m = P(\beta=\beta^*)<1$; in particular, asymptotically a positive
fraction $1-m$ of draws satisfy $\beta_j\neq\beta^*$.

Define
\begin{equation}\label{eq:Hp}
    H_p \;:=\; G_p(\beta^*)
    \;=\; \frac{\nu^2}{p\sigma^2}
       + \frac{\nu^2}{p}\sum_{\substack{j=1\\\beta_j\neq\beta^*}}^p
         \frac{\beta_j}{\delta_j^2}(\beta_j-\beta^*)\,
         \mathbf{1}_{\{\beta_j\le\beta^*\}},
\end{equation}
where the second equality holds because the integrand
$\beta_j(\beta_j-\beta^*)/\delta_j^2$ vanishes when
$\beta_j=\beta^*$, so those terms contribute zero.

Each summand
$X_j := \frac{\beta_j}{\delta_j^2}(\beta_j-\beta^*)
\mathbf{1}_{\{\beta_j\le\beta^*\}}
\mathbf{1}_{\{\beta_j\neq\beta^*\}}$
(defined conditionally on $\beta_j\neq\beta^*$) satisfies
$\E[|X_j|]<\infty$ by the same integrability argument as in
Preliminary~2.

We claim $\Var(X_j)>0$.  Conditionally on $\beta_j\neq\beta^*$,
the variable $X_j$ takes at least two values:
\begin{itemize}
    \item On $\{\beta_j<0\}$ (which has probability
          $\epsilon/(1-m)>0$, where $\epsilon=P(\beta<0)$):
          both $\beta_j<0$ and $\beta_j-\beta^*<0$, so
          $X_j = \frac{\beta_j}{\delta_j^2}(\beta_j-\beta^*)>0$.
    \item On $\{\beta_j>\beta^*\}$ (which has positive probability:
          if $F$ placed no mass above $\beta^*$, then $G$ would be
          constant at zero on $[\beta^*,\infty)$, contradicting
          $G(y)/y\to -\E[\beta]<0$):
          $\beta_j>\beta^*$ means
          $\mathbf{1}_{\{\beta_j\le\beta^*\}}=0$,
          so $X_j=0$.
\end{itemize}
Since $X_j$ takes at least two values --- strictly positive on
$\{\beta_j<0\}$ and zero on $\{\beta_j>\beta^*\}$, both events
having positive probability --- we have $\Var(X_j)>0$.

The number of non-atomic draws $N_p:=\#\{j:\beta_j\neq\beta^*\}$
satisfies $N_p/p\to 1-m$ a.s.\ by the SLLN.  By the CLT applied to
the i.i.d.\ summands in~\eqref{eq:Hp},
\[
    \sqrt{p}\,H_p \;\xrightarrow{d}\;
    \mathcal{N}\bigl(0,\,(1-m)\nu^4\Var(X_1)\bigr),
\]
since the bias $\nu^2/(p\sigma^2)$ contributes
$\nu^2/(\sigma^2\sqrt{p})\to 0$ after rescaling, and
$\E[H_p]\to G(\beta^*)=0$.  Since the limiting variance is strictly
positive, $H_p$ changes sign infinitely often a.s.

Now we link the sign of $H_p$ to $k_p/p$.

\smallskip
\noindent\emph{Subsequence where $H_p>0$.}\;
$G_p(\beta^*)>0$ and $G_p$ is concave with unique zero $\beta^*(p)$,
so $\beta^*(p)>\beta^*$.  Every draw $\beta_i\le\beta^*$ satisfies
$\beta_i<\beta^*(p)$ and is active.  Hence
\[
    \frac{k_p}{p} \;\ge\; F_p(\beta^*).
\]
By the Glivenko--Cantelli theorem,
$F_p(\beta^*)\to F(\beta^*)$ a.s., so also along this subsequence.
Combined with the upper bound~\eqref{eq:upper}:
\[
    F(\beta^*) \;\le\;
    \limsup_{p\to\infty}\frac{k_p}{p}
    \;\le\; F(\beta^*),
\]
hence $\limsup\, k_p/p = F(\beta^*)$ a.s.

\smallskip
\noindent\emph{Subsequence where $H_p<0$.}\;
$G_p(\beta^*)<0$, so $\beta^*(p)<\beta^*$.  Every draw
$\beta_i\ge\beta^*$ satisfies $\beta_i\ge\beta^*(p)$ and is inactive.
Hence
\[
    \frac{k_p}{p}
    = F_p(\beta^*(p)^-)
    \le F_p(\beta^{*-}).
\]
We claim $F_p(\beta^{*-})\to F(\beta^{*-})$ a.s. Indeed,
\[
    F_p(\beta^{*-}) = F_p(\beta^*) -
    \frac{1}{p}\#\{i:\beta_i=\beta^*\}.
\]
By the Glivenko--Cantelli theorem, $F_p(\beta^*)\to F(\beta^*)$ a.s.,
and by the SLLN, $\frac{1}{p}\#\{i:\beta_i=\beta^*\}\to m$ a.s.
Hence $F_p(\beta^{*-})\to F(\beta^*)-m = F(\beta^{*-})$ a.s., and
in particular along this subsequence.
Combined with the lower bound~\eqref{eq:lower}:
\[
    F(\beta^{*-}) \;\ge\;
    \liminf_{p\to\infty}\frac{k_p}{p}
    \;\ge\; F(\beta^{*-}),
\]
hence $\liminf\, k_p/p = F(\beta^{*-})$ a.s.

\smallskip
Since $F(\beta^*)-F(\beta^{*-})=m>0$, the $\liminf$ and $\limsup$ are
distinct, so the limit does not exist.  This completes Case~1.

\medskip
\noindent\underline{\textsl{Case 2: $P(\beta<0)>0$ and $\E[\beta]=0$.}}

\medskip
Since $\E[\beta]=0$,
$G'(y)=-\E[\beta\,\mathbf{1}_{\{\beta\le y\}}]
       =\E[\beta\,\mathbf{1}_{\{\beta>y\}}]\ge 0$
for all $y\ge 0$, so $G$ is non-decreasing.  Since
$G(0)=\E[\beta^2\mathbf{1}_{\{\beta\le 0\}}]>0$, it follows that
$G(y)\ge G(0)>0$ for all $y\ge 0$, so $G$ has no zero on $[0,\infty)$.

Since $G_p(y)\to G(y)\ge G(0)>0$ almost surely, for any fixed $M>0$ we
have $G_p(M)>0$ for all sufficiently large $p$.  By Lemma~\ref{lem:Gp},
$G_p$ is concave, so $\beta^*(p)>M$ for all large $p$.  Since $M$ is
arbitrary, $\beta^*(p)\to\infty$ almost surely.  From
Lemma~\ref{lem:activeset},
\[
    \frac{k_p}{p}
    = \frac{1}{p}\sum_{i=1}^p\mathbf{1}_{\{\beta_i<\beta^*(p)\}}
    \ge \frac{1}{p}\sum_{i=1}^p\mathbf{1}_{\{\beta_i\le M\}}
    \;\xrightarrow{a.s.}\; F(M).
\]
Since $F(M)\to 1$ as $M\to\infty$ and $k_p/p\le 1$, we conclude
$k_p/p\to 1$ almost surely.

\medskip
\noindent\underline{\textsl{Case 3: $P(\beta<0)=0$ and $P(\beta>0)>0$.}}

\medskip
Since all mass of $F$ is on $[0,\infty)$,
$G(0)=\int_{-\infty}^0 x^2\,dF(x)=0$.
Since $F$ places positive mass on $(0,\infty)$, there exists
$\varepsilon>0$ with $P(\beta>\varepsilon)>0$.  For $y>\varepsilon$,
the integrand $x(x-y)$ is negative on $(0,y)$, so
\[
    G(y)=\int_0^y x(x-y)\,dF(x)
    \le\int_\varepsilon^y x(x-y)\,dF(x)
    \le -\varepsilon(y-\varepsilon)\,P(\beta\in(\varepsilon,y]).
\]
Hence $G(y)<0$ for all sufficiently large $y$, so $G$ is not
identically zero on $[0,\infty)$.  Since $G(0)=0$ and $G$ is
non-increasing (as its derivative
$G'(y)=-\E[\beta\,\mathbf{1}_{\{\beta\le y\}}]\le 0$ for $y\ge 0$
when $P(\beta<0)=0$),
\[
    \beta^*=\sup\{x\ge 0:G(x)=0\}
\]
is finite.  Moreover, $G(y)=0$ for $y\le\beta^*$ and $G(y)<0$ for
$y>\beta^*$, so $G$ has a strict sign change at $\beta^*$.

We identify $\beta^*$ directly.  For $y\le\beta^*$, $G(y)=0$ requires
$\int_0^y x(x-y)\,dF(x)=0$; since the integrand is non-positive and
equals zero only at $x=0$ and $x=y$, this forces $F$ to place no mass
on $(0,y)$, i.e.\ $F(y)=F(0)=0$.  Conversely, if $F(y)=0$ then
$G(y)=0$.  Thus
\[
    \beta^*=\sup\{y\ge 0:F(y)=0\}=\inf\{y\ge 0:F(y)>0\},
\]
which is the left endpoint of the support of $F$ restricted to
$(0,\infty)$.  In particular, $F(\beta^{*-})=0$.

Since $P(\beta<0)=0$, the SLLN for $G_p(y)$ requires only
$\E[|\beta|]<\infty$, which holds.  For all large $p$,
$(1/p)\sum_i\beta_i/\delta_i^2\to\E[\beta/\delta^2]>0$ almost surely
(using $\E[\beta]>0$, which follows from $P(\beta>0)>0$ and
$\E[\beta]\ge 0$; if $\E[\beta]=0$ with all mass on $[0,\infty)$ then
$\beta=0$ a.s., contradicting $P(\beta>0)>0$).
An argument parallel to Case~1 (Step~2) shows $G_p$ has a unique zero
$\beta^*(p)$ for all large $p$, with $\beta^*(p)\to\beta^*$ almost
surely by Preliminary~5 (the strict sign change of $G$ at $\beta^*$
is established above).

The upper bound from Step~3 of Case~1 applies verbatim:
$\limsup\,k_p/p \le F(\beta^*)$ a.s.

For the lower bound, we show $\liminf\,k_p/p \ge F(\beta^*)$ as well,
so that the limit exists and equals $F(\beta^*) = P(\beta=\beta^*)$.

If $m:=P(\beta=\beta^*)=0$, then $F(\beta^*)=0$ and
$0\le k_p/p\le F(\beta^*)=0$ forces $k_p/p\to 0$ a.s.

If $m>0$, consider $H_p := G_p(\beta^*)$.  Since $\beta^*$ is the left
endpoint of the support and $P(\beta<0)=0$, the only draws with
$\beta_j\le\beta^*$ satisfy $\beta_j=\beta^*$ and contribute zero to
the sum in~\eqref{eq:Hp}.  Hence
\[
    H_p = \frac{\nu^2}{p\sigma^2} > 0
    \quad\text{for all }p.
\]
There are no CLT fluctuations to overwhelm this bias (unlike Case~1,
there are no draws below $\beta^*$ to generate them).
Since $G_p(\beta^*)>0$ for all~$p$, concavity gives
$\beta^*(p)>\beta^*$ for all large~$p$, so
\[
    \frac{k_p}{p} = F_p(\beta^*(p)^-)
    \ge F_p(\beta^*) \to F(\beta^*) = m
    \quad\text{a.s.}
\]
Combined with the upper bound, $k_p/p\to m$ a.s.
\end{proof}

\begin{remark}[Financial interpretation]\label{rem:financial}
When $\beta^*$ is an atom of $F$ and $P(\beta<0)>0$ (Case~1), a
positive fraction $m$ of assets share the identical factor loading
$\beta^*$.  For any finite $p$, the active set either includes or
excludes this entire block, causing $k_p/p$ to oscillate between
$\approx F(\beta^{*-})$ and $\approx F(\beta^*)$ from one realization
to the next.  The oscillation is driven by the random composition of
the remaining assets --- how many betas land at each non-$\beta^*$
support point --- rather than by the idiosyncratic variances alone.
This is a genuine instability of the LOMV portfolio at the boundary
of the active set.
In Case~3, the atomic block sits at the left edge of the support
and is always included --- there is no oscillation.
\end{remark}

\subsection{Proof of Theorem \ref{thm:convergence}} \label{sec:convergence}

\begin{proof}
Throughout, write $\varepsilon_n:=F_n(0)$ and
\[
  q_n \;:=\; F_n(y_n^*)-F_n(0),
\]
for the mass $F_n$ places on $(0,y_n^*]$, so that
$F_n(y_n^*)=\varepsilon_n+q_n$.  The goal is to show
$q_n=O(\varepsilon_n^{1/3})$.

\begin{lemma}[Upper bound on $y_n^*$]\label{lem:ybound}
$y_n^*\le D:=C/\mu$.
\end{lemma}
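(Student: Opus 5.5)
We prove $y_n^*\le C/\mu$ by evaluating $G_n$ at $D=C/\mu$ and showing $G_n(D)\le 0$. Since $G_n$ is concave on $[0,\infty)$, positive at $0$, and has the unique zero $y_n^*$, the function is positive exactly on $[0,y_n^*)$. So $G_n(D)\le 0$ forces $D\ge y_n^*$.

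To bound $G_n(D)$, split the integral at $D$ using the tail:
\[
G_n(y)=\int_{-\infty}^{y}x^2\,dF_n - y\int_{-\infty}^{y}x\,dF_n
= \E_{F_n}[X^2\mathbf{1}_{\{X\le y\}}] - y\,\mu_n + y\,\E_{F_n}[X\mathbf{1}_{\{X>y\}}].
\]
For $y\ge 0$ we have $x\le x^2/y$ whenever $x>y$. Hence $y\,\E_{F_n}[X\mathbf{1}_{\{X>y\}}]\le \E_{F_n}[X^2\mathbf{1}_{\{X>y\}}]$, and therefore
\[
G_n(y)\le \E_{F_n}[X^2]-y\mu_n\le C-y\mu .
\]
At $y=D=C/\mu$ this gives $G_n(D)\le 0$, and so $y_n^*\le D$.

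The remaining care is the concavity and sign structure of $G_n$, which the argument above relies on. I would justify these as in Lemma~\ref{lem:Gp} and Step~1 of Theorem~\ref{thm:kp}:
\begin{itemize}
\item The derivative is $G_n'(y)=-\E_{F_n}[X\mathbf{1}_{\{X\le y\}}]$, which is non-increasing for $y>0$, so $G_n$ is concave there.
\item $G_n(0)=\E_{F_n}[X^2\mathbf{1}_{\{X\le 0\}}]$. This is strictly positive when $F_n$ has mass on $(-\infty,0)$. In the degenerate case where all of $F_n(0)$ sits at the atom $0$, we have $G_n(0)=0$, and uniqueness of the zero as stated means $y_n^*$ is that zero; since $G_n$ is non-increasing for $y\ge0$, it is then negative beyond $y_n^*$.
\end{itemize}
In either case, $G_n(D)\le0$ with $D>0$ yields $y_n^*\le D$. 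The only subtlety is equality $G_n(D)=0$, but then $D$ is itself a zero and uniqueness gives $y_n^*=D$. No step here is a real obstacle; the key inequality is the one-line tail bound above.
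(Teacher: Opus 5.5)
Your proof is correct and rests on the same inequality as the paper's: $\E_{F_n}[X(X-y);\,X>y]\ge 0$ for $y\ge 0$, which gives $G_n(y)\le \E_{F_n}[X^2]-y\mu_n$. The only difference is that the paper applies this directly at $y=y_n^*$, where $G_n(y_n^*)=0$ immediately yields $y_n^*\mu_n\le \E_{F_n}[X^2]\le C$, so it needs none of your concavity and sign-structure bookkeeping (and in fact bounds every nonnegative zero of $G_n$, whether or not it is unique).
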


\begin{proof}
When $y_n^*=0$ the bound is trivial.  When $y_n^*>0$ the condition
$G_n(y_n^*)=0$ is
\[
  \int_{-\infty}^{y_n^*}x(x-y_n^*)\,dF_n(x)=0,
\]
or, equivalently, $\E_{F_n}[X^2;\,X\le y_n^*]-y_n^*\,\E_{F_n}[X;\,X\le y_n^*] =0$.
Adding $\E_{F_n}[X(X-y_n^*);\,X>y_n^*]$ to both sides gives
\[
  \E_{F_n}[X^2]-y_n^*\mu_n=\E_{F_n}[X(X-y_n^*);\,X>y_n^*]\ge 0,
\]
since the integrand is non-negative on $\{X>y_n^*\ge 0\}$.  Hence
$y_n^*\le\E_{F_n}[X^2]/\mu_n\le C/\mu=D$.
\end{proof}

\begin{lemma}[Exact identity at $y_n^*$]\label{lem:identity}
When $y_n^*>0$,
\begin{equation}
  \underbrace{\int_0^{y_n^*}x(y_n^*-x)\,dF_n(x)}_{=:\,L_n}
  \;=\;
  \underbrace{%
    \E_{F_n}[X^2;\,X\le 0]+y_n^*\,\E_{F_n}[|X|;\,X\le 0]
  }_{=:\,R_n},
  \label{eq:star}
\end{equation}
so that $L_n=R_n\ge 0$.
\end{lemma}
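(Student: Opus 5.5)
The plan is to split the defining equation $G_n(y_n^*)=0$ at $x=0$ and move the nonpositive-region contribution to the other side; everything else is a sign check. Write
\[
  0=G_n(y_n^*)=\int_{-\infty}^{0}x(x-y_n^*)\,dF_n(x)+\int_{(0,y_n^*]}x(x-y_n^*)\,dF_n(x).
\]
This split is legitimate because $y_n^*>0$, so the interval $(-\infty,y_n^*]$ decomposes as $(-\infty,0]\cup(0,y_n^*]$. Both integrals are finite. The first is finite by the bound $\E_{F_n}[X^2]\le C$, which also controls $\E_{F_n}[|X|]$. The second is finite since its integrand is bounded on the compact interval.

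The first step expands the first integral. On $\{x\le 0\}$ we have $-x=|x|$, so
\[
  x(x-y_n^*)=x^2+y_n^*|x|,
\]
and therefore the first integral equals $\E_{F_n}[X^2;X\le 0]+y_n^*\E_{F_n}[|X|;X\le 0]=R_n$. The second step rewrites the second integral as $-\int_0^{y_n^*}x(y_n^*-x)\,dF_n(x)=-L_n$. A possible atom at $0$ contributes nothing to this integral, since the integrand vanishes at $x=0$. The same is true at $x=y_n^*$, so the endpoint convention does not matter either. Substituting both pieces gives $0=R_n-L_n$, which is \eqref{eq:star}.

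The third step is nonnegativity. On $\{x\le 0\}$ both $x^2$ and $y_n^*|x|$ are nonnegative, so $R_n\ge 0$. On $(0,y_n^*]$ the integrand $x(y_n^*-x)$ is nonnegative, so $L_n\ge 0$ as well. Hence $L_n=R_n\ge 0$.

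There is no real obstacle here. The only care needed is in the bookkeeping: placing the point $0$ consistently in the left-hand piece to match $\E[\,\cdot\,;X\le 0]$, and getting the sign conversions $-x=|x|$ and $x(x-y)=-x(y-x)$ right.
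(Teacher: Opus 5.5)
Your proposal is correct and is essentially the paper's own proof. You split $G_n(y_n^*)=0$ at $x=0$, identify the nonpositive piece as $R_n$ via $x(x-y_n^*)=x^2+y_n^*|x|$ and the positive piece as $-L_n$, and your extra remarks on integrability and on possible atoms at $0$ and $y_n^*$ (where the integrand vanishes) are harmless refinements of the same argument.
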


\begin{proof}
Split $G_n(y_n^*)=0$ at $x=0$:
\[
  \int_{-\infty}^0 x(x-y_n^*)\,dF_n(x)
  +\int_0^{y_n^*}x(x-y_n^*)\,dF_n(x)=0.
\]
For $x\le 0$ and $y_n^*>0$ we have $x(x-y_n^*)=x^2+y_n^*|x|\ge 0$, so
the first integral equals $R_n\ge 0$.  Since $x-y_n^*<0$ on $(0,y_n^*)$
the second integral equals $-L_n\le 0$.  Hence $R_n-L_n=0$.
\end{proof}

\begin{lemma}[Upper bound on $R_n$]\label{lem:Rbound}
$R_n\le(K+D\sqrt{K})\,\varepsilon_n$.
\end{lemma}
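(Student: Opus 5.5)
The plan is to bound the two pieces of $R_n$ separately, using the definitions $R_n=\E_{F_n}[X^2;\,X\le 0]+y_n^*\,\E_{F_n}[|X|;\,X\le 0]$ and $\varepsilon_n=F_n(0)=P_{F_n}(X\le 0)$. Both terms are expectations restricted to the event $\{X\le 0\}$, which has probability $\varepsilon_n$. Each should therefore be written as a conditional expectation multiplied by $\varepsilon_n$. The conditional second-moment hypothesis $\E_{F_n}[X^2\mid X\le 0]\le K$ then gives the bound.

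\emph{First term.} By the definition of conditional expectation,
\[
\E_{F_n}[X^2;\,X\le 0]=\varepsilon_n\,\E_{F_n}[X^2\mid X\le 0]\le K\varepsilon_n.
\]

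\emph{Second term.} I will use Cauchy--Schwarz (equivalently, Jensen) for the conditional law given $\{X\le 0\}$:
\[
\E_{F_n}[|X|;\,X\le 0]=\varepsilon_n\,\E_{F_n}[|X|\mid X\le 0]\le \varepsilon_n\bigl(\E_{F_n}[X^2\mid X\le 0]\bigr)^{1/2}\le \sqrt{K}\,\varepsilon_n .
\]
Multiplying by $y_n^*$ and applying Lemma~\ref{lem:ybound}, which gives $0\le y_n^*\le D=C/\mu$, bounds the second term by $D\sqrt{K}\,\varepsilon_n$.

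Adding the two bounds gives $R_n\le (K+D\sqrt{K})\,\varepsilon_n$. If $y_n^*=0$, the second term vanishes and the bound holds trivially. Since $\varepsilon_n>0$ by hypothesis, the conditional expectations are well defined.

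There is no real obstacle here. The only point that needs care is to use the conditional moment bound rather than the unconditional bound $C$. Using $C$ would lose the factor $\varepsilon_n$, and that factor is exactly what feeds the cube-root rate later in the proof.
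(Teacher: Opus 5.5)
Your proposal is correct and follows essentially the same route as the paper's proof. You write each piece of $R_n$ as $\varepsilon_n$ times a conditional expectation, apply Jensen (Cauchy--Schwarz) to get $\E_{F_n}[|X|\mid X\le 0]\le\sqrt{K}$, and then use $y_n^*\le D$ from Lemma~\ref{lem:ybound}.
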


\begin{proof}
By the bounded conditional second moment,
\[
  \E_{F_n}[X^2;\,X\le 0]
  =\E_{F_n}[X^2\mid X\le 0]\cdot\varepsilon_n
  \le K\varepsilon_n.
\]
Jensen's inequality gives $\E_{F_n}[|X|\mid X\le 0]^2
\le\E_{F_n}[X^2\mid X\le 0]\le K$, so
\[
  \E_{F_n}[|X|;\,X\le 0]
  =\E_{F_n}[|X|\mid X\le 0]\cdot\varepsilon_n
  \le\sqrt{K}\,\varepsilon_n.
\]
Using $y_n^*\le D$ from Lemma~\ref{lem:ybound}:
$R_n\le K\varepsilon_n+D\sqrt{K}\,\varepsilon_n=(K+D\sqrt{K})\varepsilon_n$.
\end{proof}

\begin{lemma}[Lower bound on $L_n$]\label{lem:lower}
$\displaystyle L_n\ge\frac{q_n^3}{27\,M^2}$.
\end{lemma}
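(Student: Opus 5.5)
We need to show $L_n=\int_{(0,y_n^*]} x(y_n^*-x)\,dF_n(x)\ge q_n^3/(27M^2)$, where $q_n=F_n(y_n^*)-F_n(0)$ is the mass on $(0,y_n^*]$. We may assume $q_n>0$, since otherwise the bound is trivial. The integrand $x(y_n^*-x)$ is nonnegative on $(0,y_n^*]$ and vanishes only near the two endpoints. The concentration bound $F_n(x+t)-F_n(x)\le Mt$ for $x\ge 0$ limits how much of the mass $q_n$ can sit near either endpoint. The plan is to discard two thin end strips and bound the integrand from below on the middle interval.

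\textbf{Step 1.} Fix $t\in(0,y_n^*/2)$ and split $(0,y_n^*]$ into $(0,t]$, $(t,y_n^*-t]$ and $(y_n^*-t,y_n^*]$. The concentration bound, applied at $x=0$ and at $x=y_n^*-t\ge 0$, shows each end strip carries mass at most $Mt$. So the middle interval carries mass at least $q_n-2Mt$. On the middle interval, $x\ge t$ and $y_n^*-x\ge t$, so $x(y_n^*-x)\ge t\,(y_n^*-t)\ge t\cdot y_n^*/2$. This gives
\[
L_n\ \ge\ t\,\tfrac{y_n^*}{2}\,(q_n-2Mt).
\]

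\textbf{Step 2.} Eliminate $y_n^*$. Applying the concentration bound at $x=0$ with $t=y_n^*$ gives $q_n\le My_n^*$, that is, $y_n^*\ge q_n/M$. Substituting, $L_n\ge \frac{q_n}{2M}\,t\,(q_n-2Mt)$.

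\textbf{Step 3.} Optimize over $t$. The maximum of $t(q_n-2Mt)$ occurs at $t=q_n/(4M)$, with value $q_n^2/(8M)$. The constraint $t<y_n^*/2$ holds because $q_n/(4M)\le y_n^*/4$. This yields $L_n\ge q_n^3/(16M^2)\ge q_n^3/(27M^2)$.

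The constant $27$ suggests the authors instead split into equal thirds, which works equally well. With $t=y_n^*/3$, the ends carry at most $My_n^*/3$ each and the integrand is at least $2y_n^{*2}/9$ on the middle. One then uses $y_n^*\ge q_n/M$ and a case split on whether $q_n\ge 2My_n^*/3$; either route gives a bound of the stated form.

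The main obstacle is the end-strip control. The concentration bound is available only for $x\ge0$, which suffices here because both strips lie in $[0,y_n^*]$. We must also avoid double counting at the endpoint: the strip $(0,t]$ excludes $0$, which is exactly why $F_n(0)$ is subtracted in $q_n$.
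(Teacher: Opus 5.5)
Your proof is correct and is essentially the paper's argument: trim two end strips whose mass the concentration bound controls, bound $x(y_n^*-x)$ from below on the middle, and use $y_n^*\ge q_n/M$. The paper takes width $c=q_n/(3M)$, so each strip carries at most a third of the \emph{mass} (not a third of the length), and uses the cruder bound $x(y_n^*-x)\ge c^2$; your sharper bound $t(y_n^*-t)$ with $t=q_n/(4M)$ gives the better constant $16$. Drop your closing aside about splitting the length into thirds: with $t=y_n^*/3$ the middle-mass bound $q_n-2My_n^*/3$ is nonpositive whenever $q_n\le 2My_n^*/3$, so that route does not deliver the bound without switching back to a $q_n$-dependent width.
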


\begin{proof}
When  $q_n=0$ the inequality is trivial.
Assume $q_n>0$ and set $c:=q_n/(3M)$.

By the concentration function bound with $x=0$ and $t=y_n^*$,
\[
  q_n = F_n(y_n^*)-F_n(0) \;\le\; M\cdot y_n^*,
\]
so $y_n^*\ge q_n/M=3c$, and the interval $[c,y_n^*-c]$ is non-empty with
length at least $c>0$.  Applying the concentration function bound to the
two endpoint regions gives
\[
  F_n(c)-F_n(0) \;\le\; Mc = \frac{q_n}{3},
  \qquad
  F_n(y_n^*)-F_n(y_n^*-c) \;\le\; Mc = \frac{q_n}{3},
\]
so the mass on the middle interval satisfies
\[
  F_n(y_n^*-c)-F_n(c)
  \;\ge\; q_n - \frac{q_n}{3} - \frac{q_n}{3}
  \;=\; \frac{q_n}{3}.
\]
On $[c,y_n^*-c]$ we have $x\ge c$ and $y_n^*-x\ge c$, so
$x(y_n^*-x)\ge c^2=q_n^2/(9M^2)$.  Therefore
\[
  L_n=\int_0^{y_n^*}x(y_n^*-x)\,dF_n(x)
  \;\ge\;\int_{[c,\,y_n^*-c]}x(y_n^*-x)\,dF_n(x)
  \;\ge\;\frac{q_n^2}{9M^2}\cdot\frac{q_n}{3}
  =\frac{q_n^3}{27\,M^2}.\qedhere
\]
\end{proof}

\medskip
\noindent\textbf{Completion of the proof of Theorem~\ref{thm:convergence}.}

\smallskip
\textit{Case $y_n^*=0$.}  Then $F_n(y_n^*)=F_n(0)=\varepsilon_n\to 0$.

\smallskip
\textit{Case $y_n^*>0$.}
Combining the identity $L_n=R_n$ (Lemma~\ref{lem:identity}) with the
bounds of Lemmas~\ref{lem:Rbound} and~\ref{lem:lower}:
\[
  \frac{q_n^3}{27\,M^2}\;\le\;L_n=R_n\;\le\;(K+D\sqrt{K})\varepsilon_n.
\]
Solving for $q_n$:
\[
  q_n\;\le\;\left({27\,(K+D\sqrt{K})\,M^2}\right)^{\!1/3}
  \varepsilon_n^{1/3}.
\]
Hence
\[
  F_n(y_n^*)=\varepsilon_n+q_n
  \;\le\; \varepsilon_n + 
  3\left({(K+D\sqrt{K})\,M^2}\right)^{\!1/3}
  \varepsilon_n^{1/3}.
\]
In both cases $F_n(y_n^*)=O(\varepsilon_n^{1/3})=O(F_n(0)^{1/3})\to 0$.
\end{proof}

\bibliography{ref-master} 

@book{rockafellar1970,
	author = {R. T. Rockafellar},
	publisher = {Princeton University Press},
	title = {Convex Analysis},
	year = {1970}}

@article{green-holliffield1992,
	author = {R. C. Green and B. Hollifield},
	journal = {The Journal of Finance},
	month = {December},
	number = {5},
	pages = {1785-1809},
	title = {When will mean-variance efficient portfolios be well diversified?},
	volume = {47},
	year = {1992}}

@article{jagann2003,
	author = {R. Jagannathan and T. Ma},
	journal = {The Journal of Finance},
	month = {August},
	number = {4},
	pages = {1651-1683},
	title = {Risk reduction in large portfolios: why imposing the wrong constraint helps},
	volume = {LVIII},
	year = {2003}}

@article{markowitz1952,
	author = {H. Markowitz},
	journal = {Journal of Finance},
	number = {1},
	pages = {77-92},
	title = {Portfolio Selection},
	volume = {7},
	year = {1952}}

@article{sharpe1963,
	author = {William F. Sharpe},
	journal = {Management Science},
	month = {Jan.},
	number = {2},
	pages = {277-293},
	title = {A simplified model for portfolio analysis},
	volume = {9},
	year = {1963}}

@article{best1992,
	author = {M. J. Best and R. R. Grauer},
	journal = {Journal of Financial and Quantitative Analysis},
	number = {4},
	pages = {513-537},
	title = {Positively weighted minimum-variance portfolios and the structure of asset expected returns},
	volume = {27},
	year = {1992}}

@book{boyd-vandenberghe2004,
	address = {Cambridge, UK},
	author = {Boyd, S. and Vandenberghe, L.},
	publisher = {Cambridge Univ. Press},
	title = {Convex Optimization},
	year = {2004}}

@article{cst2011,
	author = {R. Clarke and H. De Silva and S. Thorley},
	journal = {The Journal of Portfolio Management},
	pages = {31--45},
	title = {Minimum-Variance Portfolio Composition},
	volume = {Winter},
	year = {2011}}

@article{qi2021,
	author = {H. Qi},
	journal = {Operations Research Letters},
	pages = {795-801},
	title = {On the long-only minimum variance portfolio under single factor model},
	volume = {49},
	year = {2021}}

@book{beck2023,
	author = {Amir Beck},
	edition = {2nd},
	publisher = {SIAM},
	title = {Introduction to Nonlinear Optimization: Theory, Algorithms, and Applications with Python and MATLAB},
	year = {2023}}

@unpublished{bernstein2025,
	author = {A. Bernstein and A. Shkolnik},
	month = {September},
	note = {preprint},
	title = {Asymptotics of Quadratic Forms on a Simplex},
	year = {2025}}

\end{document}